\newtheorem{theorem}{Theorem}
\newtheorem{lemma}[theorem]{Lemma}
\newtheorem{note}[theorem]{Note}
\newtheorem{conjecture}[theorem]{Conjecture}
\newtheorem{definition}[theorem]{Definition}
\newcommand{\C}{{\mathbb C}}
\newcommand{\CC}{\mathcal{C}}
\newcommand{\PP}{\mathcal{P}}
\newcommand{\mc}[1]{\mathcal{#1}}
\newcommand{\on}[1]{\operatorname{#1}}
\newcommand{\spann}{\on{span}}
\newcommand{\vrho}{V_r^{\rho}} 
\newcommand{\Ber}{\on{Ber}_X} 
\newcommand{\minpi}[1]{\iota(#1)}
\newcommand{\minpj}{\minpi{P_j}}
\newcommand{\minipi}{\minpi{P_i}}
\newcommand{\maxv}{\on{max}(v)}
\newcommand{\redw}{w^c}
\newcommand{\m}{w_{-j, +i'}}
\begin{document}

\title{The Bad Locus in the moduli of Super Riemann Surfaces with Ramond punctures}
\author{Ron Donagi}
\address{Department of Mathematics, University of Pennsylvania, Philadelphia, PA 19104}
\email{donagi@math.upenn.edu}
\author{Nadia Ott}
\address{Department of Mathematics, University of Pennsylvania, Philadelphia, PA 19104}
\email{ottnadia@sas.upenn.edu}

\date{January 12, 2023}

\begin{abstract}
The bad locus in the moduli of super Riemann surfaces with Ramond punctures parametrizes those super Riemann surfaces that have more than the expected number of independent closed holomorphic 1-forms. There is a super period map that depends on certain discrete choices. For each such choice, the period map blows up along a divisor that contains the bad locus. Our main result is that away from the bad locus, at least one of these period maps remains finite. In other words, we identify the bad locus as the intersection of the blowup divisors. The proof abstracts the situation into a question in linear algebra, which we then solve. We also give some bounds on the dimension of the bad locus.

\end{abstract}
\maketitle
\newpage
\tableofcontents
\thispagestyle{empty}

\section{Introduction}

The super vector space  $\Omega$ of closed holomorphic differentials  on a general super Riemann surface of genus $g$ with $2r$ Ramond punctures has dimension $g|r$. The super period map  as defined by Witten in \cite{WittenRamond}  sends this space to a $g|r$ dimensional  maximal isotropic subspace $P(\Omega)$ of period space,  which is a $2g|2r$-dimensional super vector space with a supersymplectic pairing. Unfortunately  there is a bad locus $B$ in the  moduli space parametrizing super Riemann surfaces \footnote{  The  supermoduli space of super Riemann surfaces with Ramond punctures is a Deligne-Mumford superstack   (\cite{CodogniModuli}, \cite{OVRamond}, \cite{BruzzoHernRamond}, \cite{Moose}) and so locally has an \'etale cover by superschemes, and a compatible system of universal curves. All our statements pertain to the stack, and can be easily translated into statements over the \'etale covers.  }  where the odd  dimension of $\Omega$ is strictly greater than $r$, so the period map fails to be injective there. This bad locus $B$ was studied by Witten in \cite{WittenRamond}, mostly in the case $r=1$. The goal of this note is to study the bad locus for higher $r$. 

When $r=1$, the bad locus has codimension 1 and it is the locus where Witten's super period map blows up. We will see in Theorem \ref{dim} that in general  the bad locus (or at least one of its components) has codimension $r$. The period map depends on some discrete choices: a symplectic basis of the integral homology, as in the bosonic case, plus an orientation  $\psi$ of each of the Ramond punctures.
Each  choice of $\psi$ determines a period matrix that blows up along a certain divisor $Y''_{\sigma}$.
Here  $\sigma = \sigma(\psi)$ is a sign choice, determined by $\psi$ but retaining less information than $\psi$: there are $2^{2r}$ choices of $\psi$, but only $2^r$ choices for  $\sigma$. The blowup divisor $Y''_{\sigma}$ depends only on ${\sigma}$.
Our main result, Theorem \ref{Main}, is that the intersection $Y$ of these divisors equals the bad locus $B$. The $Y''_{\sigma}$ therefore give a more-or-less explicit set of equations for the bad locus. (This story has a variant that depends additionally on the choice of a pairing $\nu$ of the $2r$ punctures. 
It  follows from our result that this intersection of the  $Y''_{\sigma}$ is independent of the choice of $\nu$.)

We review the setup and state the main result in section \ref{periods}. Our strategy for the proof is to abstract the situation and translate it to a question in linear algebra, Conjecture \ref{conj}. A special case of this conjecture, sufficient for our needs here, is proved in Appendix \ref{pf}. The full conjecture was proved in \cite{Nathan}.
The  reduction of the main theorem to this linear algebra result  is explained in section \ref{translation}. 

During the preparation of this work, Ron Donagi was supported in part by NSF grant DMS 2001673 and by Simons HMS Collaboration grant \#390287. Nadia Ott was supported in part by Simons HMS Collaboration grant \#390287.

\section{Periods of super Riemann surfaces with Ramond punctures} \label{periods}

A genus g super Riemann surface  $X$  with $2r$ Ramond punctures determines a curve $C$ of the same genus $g$ with a divisor $D=\Sigma_{i=1}^{2r} x_i$  consisting of $2r$ distinct labeled points and a twisted spin structure, i.e. a line bundle $L$ satisfying 
\begin{equation}
L^{\otimes 2} = K_C (D).	\label{spin}
\end{equation}
Riemann-Roch gives
\begin{equation} \label{Euler}
\chi(L)= h^0(L) - h^0(L(-D)) =r.
\end{equation}
For $r>0$ an easy general position argument shows that generically $h^0(L)=r$. 
This means that the set of super Riemann surfaces with $h^0(L)=r$ can be identified with  an open subset of the moduli space
\begin{equation} \label{M}
M := \{(C,D,L) \  | \  C \ \text{a curve of genus} \ g,  D \in C^{2r}, \    
 L \in {\text{Pic}}^{g+r-1}(C) \ {\text{satisfies \eqref{spin}}} \}.  
\end{equation}
Following Witten \cite{WittenRamond}, we refer to the complement 
\[
B := \{(C,D,L) | h^0(L) \geq r+1 \} \subset M.
\]
 of this open subset as the ``bad locus".

The super vector space $\Omega$ of closed 1-forms on $X$   has dimension $g|h^0(L)$. 
(Our sign convention is that when $X$ is split, the forms pulled back from the reduced curve $C$ are considered even, e.g, for  local coordinates $(z, \theta)$ on $X$,   $dz$ is even and $d \theta$ is odd. Both this and the opposite convention occur in the literature.)
In the split case, the even closed forms are pulled back from $H^0(C,K_C)$ while the odd closed forms are exact, of the form $da(z)\theta$ where $a=a(z)$ is a section of the twisted spin bundle $L$, cf. \cite{WittenRamond}, section 5.1, or our appendix \ref{oddwitten}.
So generically $\Omega$ is of dimension $g|r$.

We recall Witten's  superperiod map
\[
P: \Omega \to \Lambda= \Lambda_0 \oplus \Lambda_- \cong \C^{2g|2r}.
\]
This depends on some discrete choices.
The $2g$ even periods, just as in the classical theory, depend on the choice of a symplectic basis $(a_1, \dots, a_g; b_1,\dots,b_g)$ of $H_1(X, \mathbb{Z})$.
Since the residue gives a natural trivialization of the fiber of $K_C(D)$ at each puncture $x_i \in D$, 
equation  \eqref{spin} specifies a trivialization of the square $(L|{x_i})^2$ of the fiber of $L$ at  $x_i$. By an {\em orientation} $\psi_i$ of the puncture we mean a square root of this, i.e. a trivialization of $L|{x_i}$ that squares to the above trivialization of $K_C(D)$. The $2r$ odd periods depend on a choice of orientation at each of the $2r$ punctures.

These discrete choices determine identifications
\[
\Lambda_0 = H^1(C, \C)
\]
and
\[
\Pi \Lambda_1 = H^0(L/L(-D)) = \bigoplus_{i=1}^{2r} L_{x_i}. 
\]
The even part of the superperiod map is then restriction from $X$ to $C$ followed by integration on the symplectic basis $(a_1, \dots, a_g; b_1,\dots,b_g)$. The odd part  is just the restriction from $C$ to $D$:
\[
H^0(C,L) \to H^0(D,L) = H^0(L/L(-D)),
\]
cf. Appendix \ref{oddwitten}.

One bad thing about the ``bad" locus $B$ is that this superperiod map $P$ is not injective  there: On $X \in B$ there are  (odd)  closed $1$-forms with all periods equal to zero, as can be seen from the long exact sequence 
\begin{equation} 0 \to H^0(L(-D)) \to H^0(L) \overset{P_1}{\to} H^0(L/L(-D))  \to H^1(L(-D)) \to \dots 
\end{equation}  
where $h^0(C,L(-D)) \ge 1$.

The target of the superperiod map is period superspace $\Lambda \cong \C^{2g|2r}$. This comes equipped with a  super symplectic pairing  $Q$  (symplectic on $\C^{2g}$, orthogonal on $\C^{0|2r}$) which we may think of as a super version of the usual intersection pairing on $H^1(X, \mathbb{R})$. In the obvious notation,
\[
Q= \Sigma_{k=1}^g a_k \wedge b_k - \Sigma_{i=1}^{2r} (\psi_i)^2,
\]
where we recall that the orientations $\psi_i$ form a basis for $\Lambda_1$. 
The period map can be described by a $(g|r) \times (2g|2r)$ matrix $\PP$. In order to obtain a $(g|r) \times (g|r)$ period matrix, we need to split period space as a sum of two complementary Q-Lagrangian subspaces, $\Lambda = \Lambda_A \oplus \Lambda_B$. For the even part we use the span of $(a_1, \dots, a_g)$ for $\Lambda_{A,0}$
and of $(b_1,\dots,b_g)$ for $\Lambda_{B,0}$. 
For the odd part we fix a pairing $\nu$ of the $2r$ punctures, e.g. we could pair $x_{2i-1}$ with  $x_{2i}$. We then take 
$\Lambda_{A,1}$ as the span of $x_{2i-1}+\sqrt{-1} x_{2i}$ and 
$\Lambda_{B,1}$ as the span of $x_{2i-1}-\sqrt{-1} x_{2i}$. The period map then decomposes: $P=(P_A,P_B)$, with corresponding decomposition of the matrix, $\PP=(\PP_A,\PP_B)$.  The $(g|r) \times (g|r)$ period matrix is then $\PP_A^{-1} \PP_B$ . This is defined only where 
$\PP_A$ is invertible. It depends on: 
\begin{itemize}
\item the data $X=(C,D,L)$ underlying the super Riemann surface with  labeled Ramond punctures, 
\item  the symplectic basis $(a_1, \dots, a_g; b_1,\dots,b_g)$ of $H_1(X, \mathbb{Z})$, and 
\item the orientations $\psi_i$. 
\end{itemize}

Sometimes we may want to replace the specific pairing ($x_{2i-1}$ with  $x_{2i}$) by an arbitrary pairing $\nu$ and to emphasize the dependence on this choice. 
A pairing $\nu$ is specified by relabeling the $2r$ points as:
\[
x_{ij}, \ \ \  i=1,\dots ,r, \ \  \  j \in \pm := \{+,- \}.
\]
By gluing the points $(x_{i+}, x_{i-})$ in each pair we obtain a nodal curve $\overline{C}$ (of arithmetic genus $g+r$, with $r$ nodes).
Specifying the pairing of the points is equivalent to specifying the normalization map $\nu: C \to \overline{C}$.

Having specified the pairing $\nu$, 
a {\em sign choice} $\sigma$ is a specification of a spin bundle, aka theta characteristic, 
on the  nodal curve $\overline{C}$, \emph{i.e}, 
a line bundle $L_{\sigma}$ on the nodal curve $\overline{C}$ satisfying 
\begin{equation}\label{spinsing}
L_{\sigma}^{\otimes 2} = K_{\overline{C}}.	
\end{equation}
In particular, 
$L:=\nu^*L_{\sigma}$ is a twisted spin structure on $C$ satisfying \eqref{spin}. 
Equivalently, given $L$, $\sigma$ specifies for each $i=1,\dots,r$  an isomorphism 
\begin{equation} \label{sigma}
\sigma_i: L_{x_{i+}} \to L_{x_{i-}}
\end{equation}
whose square:
\[
\sigma_i^{\otimes 2}: K_C (D)_{|{x_{i+}}} = L_{{x_{i+}}}^{\otimes 2} \to  L_{{x_{i-}}}^{\otimes 2} = K_C (D)_{|{x_{i-}}}
\]
is identified via residues with multiplication by $-1$,
\[
\sigma_i^{\otimes 2} = -1: \C \to \C.
\]
($L_{\sigma}$  clearly determines a $\sigma$ as in \eqref{sigma}. Conversely, given $\sigma$ as in \eqref{sigma}, we recover $L_{\sigma}$ as the subsheaf of $L$ consisting of sections $s$ of $L$ satisfying 
\begin{equation} \label{gluings} s(x_{i-})= - \sqrt{-1} \sigma_i s(x_{i+}) \end{equation}  for all $1 \le i \le r$. 
In particular,  $h^0(\overline{C}, L_{\sigma}) >0$ if $L$ has a \emph{global} section satisfying the gluing equations \eqref{gluings} for all $1 \le i \le r$. ) 

Recall that an orientation $\psi$ is a trivialization of $L \vert_D$ compatible via residues with \eqref{spin}. In our current notation this amounts to trivializations $\psi_{ij}$ of each $L_{x_{ij}}$. Such a $\psi$ combined with formula \eqref{sigma} therefore determines a 
$\sigma=\sigma(\psi)$. Changing $\psi$ at both $x_{i+}$ and $x_{i-}$ leaves $\sigma$ unchanged. On a given $X$ there are $2^{2r}$ $\psi$'s but only $2^r$ $\sigma$'s. 

Recall that the odd parts
\[
P_1:\Omega_1 \to\Lambda_1, \ P_{A,1}:\Omega_1 \to \Lambda_{A,1}
\]
of the superperiod map depend on the orientation $\psi$. The kernels, however, depend only on $\sigma=\sigma(\psi)$. Indeed, if we identify $\Omega_1$ with $H^0(C,L)$, we find the identifications :
\begin{equation} \label{ker}
Ker(P_{1})  = H^0(L(-D)) \subset 
Ker(P_{A,1})  = H^0(\overline{C}, L_{\sigma}) \subset H^0(C,L).
\end{equation}



 We will write $\PP^{\nu, \sigma}$ when we want to emphasize  that the matrix $\PP$ representing the period map depends on a choice of pairing $\nu$ and sign $\sigma$.

\section{Moduli spaces and the main theorem} 

We are interested in describing the locus in moduli space where the period matrix is not defined, i.e., the locus where  $\PP_A$ is singular for all possible  signs $\sigma$ and for some or all pairings $\nu$.  To do this, we need to consider some covers of the moduli space $M$ of \eqref{M}.
Let 
\[
M' := \{\nu: C \to  \overline{C}, D, L \}
\]
be the (bosonic) moduli space parametrizing the data of a curve $C$ with  labeled divisor $D$, a pairing $\nu$ on the points of $D$, and a twisted spin bundle $L$ satisfying \eqref{spin}. There is a forgetful map 
\[
\pi: M' \to M. 
\]
There is a further $2^r$-sheeted cover 
\[
\pi': M'' \to M'
\] 
where 
\[
M'' := \{\nu: C \to  \overline{C}, D, L_{\sigma} \}
\]
parametrizes curves $C$ with labeled divisor $D$, a pairing of the points of $D$, and a  choice of a spin bundle $L_{\sigma}$ satisfying \eqref{spinsing}. 

Let  $\PP$ denote (the component of degree $g+r-1$ of) the relative Picard of the universal
curve over the Deligne-Mumford compactification of the moduli space $M_{g+r}$ of curves of genus $g+r$, so
\[
\PP := \{ C', L'  \ | \ C' \ {\text{a stable curve of genus}} \ g +r, L' \in {\text{Pic}}^{g+r-1}(C') \}.
\]
In $\PP$ there is a natural divisor $\PP_e$  parametrizing pairs $ \{ C', L' \}$ where $L'$ is an effective line bundle: $h^0(L')>0$.
There is a natural map of $M''$ to  $\PP$ sending
$\{\nu: C \to  \overline{C}, D, L_{\sigma} \}$
to
$\{  \overline{C}, L_{\sigma} \}.$
We let $Y''$ denote the inverse image of the divisor $\PP_e$:
\[
Y'' = Y'' _{\sigma} := \{  ( \nu: C \to  \overline{C}, D, L_{\sigma}) \  | \ L_{\sigma}^{\otimes 2} = K_{ \overline{C}}, \ h^0(L_{\sigma}) \geq 1 \} \subset M''.
\]
This is a divisor in $M''$.  If $(C,D,L, \nu, \sigma) \in Y''$ then  there exists a global section of $L$ satisfying \eqref{gluings} for all $1 \le i \le r$, i.e., there exists a global section $s$ whose odd A-periods all vanish. 
So $P_{A,1}^{\nu, \sigma}$ has a non-trivial kernel and the matrix $\PP_{A,1}^{\nu, \sigma}$ is singular.

As shown in Appendix A of  \cite{WittenRamond}, the period matrix (viewed as a map on $M''$) blows up (has a pole) along this divisor. 
This divisor $Y'' _{\sigma} $ does not depend on a choice of sign $\sigma$; the subscript $\sigma$ simply reminds us that it is a divisor in $M''$, whose fiber over a point of $M'$ is indexed by the $\sigma$'s.) The reason that $\sigma$ occurs here rather than the full orientation $\psi$ was explained at the end of the previous section:  $Y''$ parametrizes points where the (odd part of the) superperiod map fails to be injective. According to \eqref{ker}, this depends only on $\sigma$ rather than $\psi$.

Descending to $M'$, we have the closed locus
\[
Y' := \{  ( \nu: C \to  \overline{C}, D, L)  \  | \  L^{\otimes 2} = K_C (D),   \ h^0(L_{\sigma}) \geq 1 \ \forall \sigma \} 
= \{ y \in M'  \ | \  \pi'^{-1}(y) \subset Y''  \} 
\subset M'
\]
parametrizing super Riemann surfaces with a given pairing of their $2r$ Ramond punctures $D$ such that the period matrix blows up   for {\em every} sign choice $\sigma$, i.e., $(C,D,L, \nu) \in Y'$ if $ \PP_A^{\nu, \sigma}$ is singular for all $\sigma$. So the complement of $Y'$ is the open subset where the period matrix, for at least one sign choice $\sigma$, makes sense.
Somewhat informally we write $Y'=\cap_{\sigma} Y''_{\sigma}$. (Informal because $Y'' \subset M''$ while $Y' \subset M'$. )

Descending further to $M$, we have the closed locus
\[
Y := \{(C,D,L) | ( \nu: C \to  \overline{C}, D, L) \in Y' \ \forall \nu \} = \{ y \in M  \ | \ \pi^{-1}(y) \subset Y' \} \subset M,
\]
where $(C,D, L) \in Y$ if $\PP_A^{\nu, \sigma}$ is singular for all pairings $\nu$ and signs $\sigma$, 
and also the a priori larger locus
\[
\tilde{Y} := \{(C,D,L) | ( \nu: C \to  \overline{C}, D, L) \in Y' \ {\text{for some}} \ \nu \} =  \pi(Y') \subset M. 
\]
The distinction is that $(C,D,L)$ is in $\tilde{Y}$ if   $(C,D,L, \nu) \in Y'$ for some pairing $\nu$, while in order to be in $Y$ this must hold for all $\nu$ . 

\begin{lemma} \label{something} $B \subseteq Y \subseteq  \tilde{Y}$ \end{lemma}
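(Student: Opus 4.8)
The plan is to prove the two inclusions separately; both are elementary once one uses the description of $H^0(\overline{C},L_\sigma)$ recalled around \eqref{gluings} and \eqref{ker}.

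First I would dispose of $Y\subseteq\tilde Y$, which is purely formal. The map $\pi\colon M'\to M$ is a finite surjective cover: the fiber over $(C,D,L)$ is the nonempty finite set of pairings $\nu$ of the $2r$ points of $D$. Hence $\pi^{-1}(y)\neq\emptyset$ for every $y$, so if $y\in Y$, i.e.\ $\pi^{-1}(y)\subseteq Y'$, then $\pi^{-1}(y)\cap Y'=\pi^{-1}(y)\neq\emptyset$, whence $y\in\pi(Y')=\tilde Y$.

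The substance is $B\subseteq Y$. Fix $(C,D,L)\in B$, so $h^0(C,L)\ge r+1$, and fix an arbitrary pairing $\nu$ and an arbitrary sign choice $\sigma$; I must show $h^0(\overline{C},L_\sigma)\ge 1$. By the description preceding \eqref{ker}, $H^0(\overline{C},L_\sigma)$ is identified with the subspace of $H^0(C,L)$ cut out by the $r$ gluing conditions $s(x_{i-})=-\sqrt{-1}\,\sigma_i\,s(x_{i+})$, $i=1,\dots,r$, of \eqref{gluings}. For each $i$, composing the evaluation $s\mapsto(s(x_{i+}),s(x_{i-}))\in L_{x_{i+}}\oplus L_{x_{i-}}$ with the linear map $(a,b)\mapsto b+\sqrt{-1}\,\sigma_i\,a\in L_{x_{i-}}\cong\C$ exhibits the $i$-th gluing condition as the vanishing of a single scalar linear functional on $H^0(C,L)$. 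Therefore $H^0(\overline{C},L_\sigma)$ is an intersection of $r$ hyperplanes (some possibly equal to all of $H^0(C,L)$), so
\[
h^0(\overline{C},L_\sigma)\;\ge\;h^0(C,L)-r\;\ge\;(r+1)-r\;=\;1 .
\]
Since $\nu$ and $\sigma$ were arbitrary, $(\nu\colon C\to\overline{C},D,L)\in Y'$ for every $\nu$, i.e.\ $\pi^{-1}(C,D,L)\subseteq Y'$, which is exactly the statement $(C,D,L)\in Y$.

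I expect no serious obstacle here. The only mild point requiring care is the bookkeeping that each of the $r$ conditions in \eqref{gluings} is a single scalar equation on the $h^0(C,L)$-dimensional space $H^0(C,L)$ — hence can lower the dimension by at most $1$ — together with the correct identification of $H^0(\overline{C},L_\sigma)$ (a sheaf on $\overline C$) with that common solution space inside $H^0(C,L)=H^0(C,\nu^*L_\sigma)$. Once this is in place the dimension count is immediate, and it is worth noting that the same inequality applied with $h^0(L)=r$ is consistent with the generic vanishing $h^0(L_\sigma)=0$ that characterizes the complement of $B$.
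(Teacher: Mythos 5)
Your proof is correct, and it takes a genuinely (if mildly) different route from the paper's for the substantive inclusion $B\subseteq Y$. The paper argues via the exact sequence and the Riemann--Roch identity \eqref{Euler}: from $h^0(L)\ge r+1$ and $\chi(L)=h^0(L)-h^0(L(-D))=r$ it deduces $h^0(L(-D))\ge 1$, and then observes that any nonzero $s\in H^0(L(-D))$ vanishes at every point of $D$, hence satisfies the gluing equations \eqref{gluings} \emph{trivially and simultaneously} for every pairing $\nu$ and every sign $\sigma$ (this is the containment $H^0(L(-D))\subseteq H^0(\overline C,L_\sigma)$ recorded in \eqref{ker}). You instead fix $(\nu,\sigma)$ and bound the codimension of the solution space of the $r$ scalar gluing conditions inside $H^0(C,L)$ directly, getting $h^0(\overline C,L_\sigma)\ge h^0(L)-r\ge 1$. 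Both are dimension counts on $H^0(C,L)$; the paper's version produces one section $s$ that witnesses membership in every $Y''_\sigma$ at once (so the intersection over all $\sigma,\nu$ is visibly nonempty through a common element), whereas yours reproves nontriviality one $(\nu,\sigma)$ at a time and makes explicit that the codimension bound is an inequality (the $r$ linear functionals could be dependent). Your formal argument for $Y\subseteq\tilde Y$ via surjectivity of $\pi$ is also fine; the paper treats it as immediate and does not spell it out. No gaps.
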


\begin{proof} If $(C,D,L) \in B$, then $h^0(L(-D)) \ge 1$ and so there exists a global section of $L$ satisfying the gluing equation \eqref{gluings} for all $1 \le i \le r$ and for all pairings $\nu$ and signs $\sigma$. 
 \end{proof}

Our main result can now be stated as follows:

\begin{theorem} \label{Main}
$B = Y=\tilde{Y} $.
\end{theorem}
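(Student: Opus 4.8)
Our plan is to deduce the one remaining inclusion from Lemma~\ref{something}. Since that lemma already gives $B \subseteq Y \subseteq \tilde{Y}$, Theorem~\ref{Main} is equivalent to $\tilde{Y} \subseteq B$. Unwound, this says: if there is \emph{some} pairing $\nu$ of the $2r$ punctures for which $h^0(\overline{C}, L_\sigma) \geq 1$ holds for \emph{every} compatible sign choice $\sigma$, then already $h^0(L(-D)) \geq 1$, i.e.\ $(C,D,L) \in B$. We will argue the contrapositive: assuming $h^0(L(-D)) = 0$ we produce a $\sigma$ with $h^0(L_\sigma) = 0$, equivalently a $\sigma$ for which $\PP_A^{\nu,\sigma}$ is nonsingular.

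Fix the pairing $\nu$ and recall from Section~\ref{periods} that the odd superperiod map is the evaluation map
\[
\rho \colon H^0(C, L) \longrightarrow H^0(L/L(-D)) = \bigoplus_{i=1}^{2r} L_{x_i} =: W, \qquad W \cong \C^{2r},
\]
whose kernel is $H^0(L(-D))$. By Riemann--Roch \eqref{Euler} its image $V := \rho(H^0(C,L))$ has dimension $h^0(L) - h^0(L(-D)) = r$, exactly half of $\dim W$; moreover $V = P_1(\Omega_1)$ is isotropic for the odd part $Q_1 = -\sum_i (\psi_i)^2$ of the super-symplectic form, because Witten's period map $P(\Omega)$ is a maximal isotropic subspace of period space, so $V$ is a $Q_1$-Lagrangian. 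The pairing $\nu$ decomposes $W = \bigoplus_{i=1}^r H_i$ with $H_i = L_{x_{i+}} \oplus L_{x_{i-}}$, and by \eqref{gluings} a compatible sign $\sigma = (\sigma_1,\dots,\sigma_r)$ singles out in each $H_i$ one of the two lines $\ell_i^{\pm}$ coming from the two square roots at the $i$-th node. Writing $\Lambda_\sigma := \bigoplus_{i=1}^r \ell_i^{\sigma_i} \subset W$ for the resulting $r$-dimensional subspace, the $\Lambda_\sigma$ (with $\sigma$ ranging over the $2^r$ sign choices) are the ``coordinate'' subspaces of this splitting, and \eqref{ker} identifies $H^0(\overline{C}, L_\sigma)$ with $\rho^{-1}(\Lambda_\sigma)$, whence
\[
h^0(\overline{C}, L_\sigma) = \dim \ker \rho + \dim(V \cap \Lambda_\sigma) = h^0(L(-D)) + \dim(V \cap \Lambda_\sigma).
\]

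Under the standing assumption $h^0(L(-D)) = 0$ this reads $h^0(L_\sigma) = \dim(V \cap \Lambda_\sigma)$, so the hypothesis that $(C,D,L,\nu) \in Y'$ becomes the purely linear-algebraic assertion that the $Q_1$-Lagrangian $V \subset \C^{2r}$ meets \emph{every} one of the $2^r$ coordinate subspaces $\Lambda_\sigma$ nontrivially. This is precisely the configuration that Conjecture~\ref{conj} forbids: a maximal isotropic subspace must be transverse to at least one such $\Lambda_\sigma$. The balanced case we need here is established in Appendix~\ref{pf}, the general statement being \cite{Nathan}. Applying it gives a $\sigma$ with $V \cap \Lambda_\sigma = 0$, hence $h^0(L_\sigma) = 0$, contradicting $(C,D,L,\nu) \in Y'$. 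Therefore $h^0(L(-D)) \geq 1$, so $(C,D,L) \in B$; this proves $\tilde{Y} \subseteq B$, and with Lemma~\ref{something} we conclude $B = Y = \tilde{Y}$.

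The substantive content is the linear-algebra statement of Conjecture~\ref{conj}; the remainder is the bookkeeping of Section~\ref{translation} --- checking that $V$ is genuinely half-dimensional and isotropic and that the $\Lambda_\sigma$ really are the $2^r$ coordinate subspaces of a splitting of $W$ into planes. I expect the isotropy of $V$, rather than any of the sheaf theory, to be the crux of the reduction: without it the statement is simply false (taking $V = H_i$ for a single $i$ gives an $r$-dimensional subspace meeting every $\Lambda_\sigma$), so the argument genuinely uses that the superperiod map lands in a Lagrangian, and one must keep the normalization of the odd periods --- including the $\sqrt{-1}$'s in \eqref{gluings} --- for which this holds. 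Granting that, the reduction to Conjecture~\ref{conj} is immediate, and the real work is pushed into Appendix~\ref{pf}.
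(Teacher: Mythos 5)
Your translation of the geometry into linear algebra is correct as far as it goes, and you have correctly identified that some Lagrangian/residue-theorem input is indispensable. But there is a genuine gap in the way you invoke Conjecture~\ref{conj}: you assert that it ``forbids'' the configuration where a Lagrangian $V$ meets every $\Lambda_\sigma$, i.e.\ that it says ``a maximal isotropic subspace must be transverse to at least one $\Lambda_\sigma$.'' That is not what the conjecture says. Conjecture~\ref{conj} is an \emph{isotropy-free} equivalence: (*) holds iff (**) holds. It does not forbid (*); it tells you that if (*) holds then (**) holds, namely there is a subset $I$ of cardinality $k<r$ with the $2k$ functionals $p_{ij}$, $i\in I$, spanning a subspace of $V^*$ of dimension $<k$. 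Your own example --- $V=H_1$ for $r=2$ --- is a case where (*) and (**) both hold, so the conjecture is silent and you still have to rule this case out.

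What is missing is the derivation of a contradiction from (**) together with the Lagrangian property. In your dual picture this is short: (**) says $\dim\bigl(V\cap\bigoplus_{i\notin I} H_i\bigr)\geq r-k+1$, while $\bigoplus_{i\notin I}H_i$ is $2(r-k)$-dimensional with nondegenerate restriction of $Q_1$, so any isotropic subspace of it has dimension at most $r-k$; contradiction. This step is where the Lagrangian condition (equivalently, the residue theorem applied to $ss'\in H^0(K_C(D))$) actually enters, and it is not ``immediate'' --- it is the real content. The paper makes exactly this step, but phrased on the sheaf side: it introduces the loci $W_I$, proves the Serre-duality identity $W_I=W_{I'}$ coming from $L^{\otimes 2}=K_C(D)$, and then a Riemann--Roch count (Lemma~\ref{nost}) shows each $W_I$, $0<\#I<r$, is already contained in $B$. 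You should supply one of these two equivalent closing arguments; as written, the proof is incomplete.
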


\section{Linear algebra}

Consider a set of $2r$ abstract points 
\[
p_{ij}, i=1,\dots ,r,  \  j \in \pm := \{+1,-1 \}.
\] 
By a section, or $r$-section, we mean a function 
\[
\sigma:\{1, \dots ,r\} \to \pm,
\] 
or equivalently the set of $r$ points 
\[
p_{i,\sigma(i)}, \  (i=1,\dots, r).
\]

Our study of Ramond punctures leads to the following:

\begin{conjecture} \label{conj} Let $V$ be a vector space  of dimension $\geq r$  containing $r$ pairs of points $p_{ij}, i=1,\dots ,r,  \  j \in \pm
$ that span $V$.  The following conditions (*), \  (**) \   are equivalent:

\noindent (*) \  \  For each of the $2^r$ sections $\sigma$, the $r$ points 
$p_{i,\sigma(i)}, \  (i=1,\dots, r)$ 
are linearly dependent.

\noindent (**) \  \  There is a subset 
$I \subset \{1, \dots , r\}$
of some cardinality $k<r$ such that the $2k$ points
$ p_{ij}, \ i \in I, \ j \in \pm$
span a subspace of dimension $<k$.
\end{conjecture}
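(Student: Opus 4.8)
The plan is to recognize Conjecture~\ref{conj} as an instance of Rado's theorem on independent transversals, together with one elementary observation that uses the standing hypothesis $\dim V \ge r$. Form the ground set $E = \{(i,j) : 1\le i \le r,\ j\in\pm\}$ of $2r$ formal symbols and equip it with the linear matroid $M$ represented by the assignment $(i,j)\mapsto p_{ij}\in V$; thus a subset $S\subseteq E$ has rank $\dim\spann\{p_{ij} : (i,j)\in S\}$, and $S$ is independent in $M$ precisely when the corresponding points are linearly independent in $V$ (loops and parallel elements, coming from vanishing or proportional $p_{ij}$, cause no trouble). The pairs $P_i := \{(i,+),(i,-)\}$ form a family of $r$ two-element subsets of $E$, a section $\sigma$ is exactly a transversal of this family, and the points $p_{i,\sigma(i)}$, $1\le i\le r$, are linearly independent iff this transversal is an independent set of $M$. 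Hence condition~(*) is the assertion that the family $(P_1,\dots,P_r)$ has \emph{no} independent transversal.

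The implication (**)$\Rightarrow$(*) needs no machinery: if $|I|=k<r$ and $\spann\{p_{ij} : i\in I,\ j\in\pm\}$ has dimension $<k$, then for any section $\sigma$ the $k$ points $\{p_{i,\sigma(i)} : i\in I\}$ lie in that subspace and are therefore dependent, so the full $r$-tuple $\{p_{i,\sigma(i)} : 1\le i \le r\}$ is dependent as well. For the converse, Rado's theorem states that $(P_1,\dots,P_r)$ has an independent transversal if and only if $\dim\spann\{p_{ij} : i\in J,\ j\in\pm\}\ge |J|$ for every $J\subseteq\{1,\dots,r\}$. By the reformulation above, (*) says this criterion fails, so there is a set $J$ with $\dim\spann\{p_{ij} : i\in J,\ j\in\pm\}<|J|$. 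This $J$ must be proper: if $J=\{1,\dots,r\}$ then, since the $2r$ points span $V$, we would get $\dim V<r$, contradicting the hypothesis. So $k:=|J|<r$ and $I:=J$ witnesses~(**).

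I expect the real obstacle to be avoiding an appeal to Rado's theorem, if one wants a self-contained argument (as the paper apparently does, proving only the special case it needs in Appendix~\ref{pf} while deferring the general statement to \cite{Nathan}). The natural substitute is induction on $r$: given~(*), if some proper subfamily already fails the rank (Hall) condition we are done, and otherwise one would like to fix a good representative at a single index, pass to the quotient of $V$ by its span, and invoke the inductive hypothesis for $r-1$. The delicate point is that deleting one index removes \emph{both} points of a pair, so one must track how the surviving pairs project into the quotient and verify that the spanning hypothesis and the bound $\dim V\ge r$ are inherited; arranging this correctly — in particular choosing which index and which of its two points to fix — is where the argument has to do genuine work, and is presumably the content of the appendix.
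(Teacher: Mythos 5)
Your proposal is correct, and it takes a genuinely different route from the paper. You reduce the statement to Rado's theorem on independent transversals in a matroid, observing that the disjoint pairs $P_i=\{(i,+),(i,-)\}$ make every section $\sigma$ a transversal and that Rado's rank condition fails exactly when some $J\subseteq\{1,\dots,r\}$ has $\dim\operatorname{span}\{p_{ij}:i\in J,\ j\in\pm\}<|J|$; the hypotheses that the $2r$ points span $V$ and $\dim V\ge r$ then rule out $J=\{1,\dots,r\}$, forcing a proper $J$ and hence (**). Both reductions are airtight, including the harmless presence of loops and parallel elements in the represented matroid. The paper, by contrast, proves only the special case $\dim V=r$ (Theorem~\ref{LinAlg}) by a self-contained, hands-on argument in Appendix~\ref{pf}: it fixes a section $v$ of maximal span dimension $\rho$, chooses a spanning set $w\subset(v\cup v')$, shows via Lemma~\ref{maxspancontainspi} that whole pairs $P_i$ with $i$ indexing the complement of $w$ lie in $\operatorname{span}w$, iterates through ``minimal sets'' to build an increasing chain $\Upsilon_0\subset\Upsilon_1\subset\cdots$ that stabilizes to a witness $I$, and deduces (**) by a counting argument; the full conjecture is deferred to \cite{Nathan}. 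What your route buys is brevity, transparency, and generality: it proves the conjecture for all $\dim V\ge r$ at once, at the price of invoking Rado's theorem as a black box. What the paper's route buys is self-containment in the narrower case it actually needs. Your closing speculation that a self-contained proof would proceed by induction on $r$ with quotients is not how the appendix goes --- the paper's argument is non-inductive and works directly with the maximal-rank section --- but that does not affect the correctness of your proof.
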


In section \ref{pf}  of the appendix we prove

\begin{theorem} \label{LinAlg}
The conjecture holds when $V$ is $r$ dimensional.
\end{theorem}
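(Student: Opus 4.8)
The plan is to prove the equivalence of (*) and (**) when $\dim V = r$. The implication (**)$\Rightarrow$(*) is the easy direction and I would dispose of it first: if $I$ has $|I|=k<r$ and the $2k$ points $p_{ij}$, $i\in I$, span a subspace $W$ of dimension $<k$, then for every section $\sigma$ the $k$ points $p_{i,\sigma(i)}$, $i\in I$, lie in $W$ and hence are linearly dependent; a fortiori all $r$ points $p_{i,\sigma(i)}$ are linearly dependent. No choices are needed here.

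The substantive direction is (*)$\Rightarrow$(**). Here I would argue by contradiction: assume (*) holds but (**) fails, i.e. for every proper subset $I\subsetneq\{1,\dots,r\}$ with $|I|=k$ the $2k$ points $p_{ij}$, $i\in I$, span a subspace of dimension $\ge k$. I would try to build a section $\sigma$ for which the $r$ points $p_{i,\sigma(i)}$ are independent, contradicting (*); since $\dim V = r$, independence of $r$ vectors is the same as spanning $V$. The natural mechanism is a greedy/inductive construction: having chosen signs $\sigma(1),\dots,\sigma(m)$ so that $p_{1,\sigma(1)},\dots,p_{m,\sigma(m)}$ are independent, I want to choose $\sigma(m+1)\in\pm$ so that $p_{m+1,\sigma(m+1)}$ is not in the span $U_m$ of the vectors already chosen. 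The obstruction is that \emph{both} $p_{m+1,+}$ and $p_{m+1,-}$ might lie in $U_m$; when that happens the naive greedy step fails and one must backtrack and revise earlier choices. Managing this backtracking is the heart of the argument and the step I expect to be the main obstacle.

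To control the backtracking I would phrase the failure condition in a way that feeds the hypothesis (**)-fails back in. Suppose at some stage we are stuck: there is a set $J$ of indices we have "used" and for the remaining indices $i$ both $p_{i,+}$ and $p_{i,-}$ lie in a subspace we can no longer escape. The key is to extract from such a stuck configuration a witness subset $I$ violating the non-(**) hypothesis — that is, a $k$-subset whose $2k$ associated points span only a $(k-1)$-or-lower dimensional space. I would maintain the invariant, throughout the greedy process, that the subspace spanned by the chosen vectors together with \emph{all} points $p_{ij}$ over the already-processed index set has dimension equal to the number of processed indices; if at some step this invariant is forced to fail, the processed index set (or a suitable subset of it) is exactly the $I$ required for (**). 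A cleaner route may be a Hall-type / matroid-theoretic formulation: define a bipartite-style condition and show that failure of the "system of distinct-and-independent representatives" yields, via a deficiency argument, a subset $I$ with $\mathrm{rank}\{p_{ij}: i\in I\} < |I|$. Because the ambient dimension is exactly $r$, the counting is tight and one gets $<|I|$ rather than merely $\le |I|$, which is what (**) demands.

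Finally I would assemble the pieces: (**)$\Rightarrow$(*) directly as above; (*)$\Rightarrow$(**) by contradiction using the greedy construction, with the deficiency/backtracking analysis producing the subset $I$. A minor point to check at the end is the strict inequality $k<r$ in (**): if the greedy construction fails only "at the top", i.e. the processed set is all of $\{1,\dots,r\}$, then the $2r$ points span a space of dimension $<r$, but since they span $V$ by hypothesis and $\dim V=r$ this cannot happen, so the obstructing $I$ is necessarily proper. The restriction $\dim V = r$ (rather than $\dim V \ge r$, as in the full Conjecture \ref{conj}) is what makes the rank count exact and lets the greedy argument close; the general case presumably requires the more delicate analysis of \cite{Nathan}.
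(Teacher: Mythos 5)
Your (**)$\Rightarrow$(*) is fine. The problem is (*)$\Rightarrow$(**): you correctly identify that the naive greedy choice can get stuck (both $p_{m+1,\pm}$ already in the span), you call the resulting backtracking "the heart of the argument and the step I expect to be the main obstacle," and then you never actually resolve it. The alternative "Hall-type / matroid-theoretic" route you gesture at is in fact a known theorem: Rado's theorem on independent transversals. Applied to the linear matroid on the ground set $\{p_{ij}\}$ with $A_i=\{p_{i+},p_{i-}\}$, Rado says an independent transversal exists iff $\dim\spann\{p_{ij}:i\in I\}\ge |I|$ for every $I$; the failure case immediately produces a deficient $I$, and since the $p_{ij}$ span a $V$ of dimension $\ge r$ one automatically has $|I|<r$. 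If you cite (or prove) Rado's theorem, the argument closes; as written, you have sketched the shape of the argument but omitted its combinatorial core. Two smaller misconceptions: the worry about getting $<|I|$ rather than $\le |I|$ is a red herring (the deficiency version of Rado always gives strict inequality, independent of ambient dimension), and the route actually handles $\dim V\ge r$ just as well as $\dim V=r$, so the restriction to $\dim V=r$ is not what would make this close.

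The paper's proof is genuinely different and self-contained. Rather than greedy/backtracking or a Rado-type deficiency argument, it fixes once and for all a choice $v\in V_r$ whose span has maximal dimension $\rho$, extracts from $v\cup v'$ a "spanning set" $w$ of size $\rho$ with no repeated index, proves via maximality that for every index $i$ outside $[w]$ the entire plane $P_i$ lies in $W=\spann v$ (Lemma \ref{maxspancontainspi}), and then propagates this containment through "minimal sets" along an increasing chain $\Upsilon_0\subset\Upsilon_1\subset\cdots$ of index sets that stabilizes at some $\Upsilon$ with the property that $P_j\subset\spann\Upsilon$ for all $j\in[\Upsilon]$ (Lemma \ref{special}). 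A count of $|\Upsilon|$ against $|[\Upsilon]\cup[w^c]|$ then exhibits the deficient subset explicitly. In short: you invoke (without proving) a matroid-theoretic black box; the paper replaces the iterated exchange implicit in that black box with a single global use of maximality followed by a propagation/closure construction, yielding an explicit witness subset rather than a pure existence statement.
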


We will need only this case. The full Conjecture has been proved in \cite{Nathan}.

 \section{From Ramond punctures to linear algebra} \label{translation}
In this section we reduce Theorem \ref{Main} to the linear algebra result, Theorem \ref{LinAlg}.

We begin by proving that condition (*) in Conjecture \ref{conj} and Theorem \ref{LinAlg} characterizes the locus $\tilde{Y}$. Let $V$ be the vector space 
\[
V := ( H^0(L) / H^0(L-D) )^*
\]
of linear functions on $H^0(L)$ that vanish on $H^0(L-D) $. 
(Equivalently, it is the dual of the image of the odd superperiod map $P_1$.)
By Riemann-Roch, its dimension is $r$.
We are interested mostly in the case that $h^0(L)=r$, so $H^0(L-D)=0 $ and $V=H^0(L)^*$.
For each $i \in \{ 1,\dots,r \}$ and $j \in \pm$, 
fix an orientation $\psi_{ij}$, i.e. a trivialization of the fiber $L_{|x_{ij}}$ of $L$ at $x_{ij}$, compatible with the residue trivialization of  $L^{\otimes 2} = K_C (D)$. 
Evaluation at $x_{ij}$ then becomes a vector $q_{ij} \in V$. 
We let $p_{ij} := q_{i+} +j \sqrt{-1}q_{i-}$, or explicitly:
\[
p_{i+} := q_{i+} +  \sqrt{-1} q_{i-}, \  \  \  p_{i-} := q_{i+} -  \sqrt{-1} q_{i-}.
\]
The collection of $2r$ vectors $q_{ij}$ clearly spans $V$, and therefore so do the $p_{ij}$. A sign choice $\sigma$ picks, for each $i$, a vector $p_{i \sigma(i)}$ which is one of $p_{i+}, p_{i-}$. 
The condition for being in $Y''_{\sigma}$ is that the $r$ points  $p_{i \sigma(i)}, \  \  i=1,\dots,r$ should be contained in a hyperplane in $V$. 
Condition (*) of  Theorem \ref{LinAlg} therefore characterizes $\tilde{Y}$. 

Next we describe a certain locus $W$ in moduli space which we  will show is equal to both $B$ (Lemma \ref{nost}) and $\tilde{Y}$ (Lemma \ref{charw}), and thereby we prove Theorem \ref{Main}.   
For every subset
\[
I \subset \{1,\dots,r \}, 
\]
the divisor 
\[
D_I := \Sigma_{i \in I}(x_{i+}+x_{i-})
\]
which is of degree $2k$ where $k := \#I$, determines  the locus:
\[
W_I := \{(C,D,L) | h^0(L(-D_I)) \geq r+1 - k \} \subset M.
\]
For example, the bad locus $B$ equals $W_{\emptyset}$. Now let
\[
W := \cup_I W_I.
\]

\begin{note} These loci obey a duality. Let $k:=\#I$ and let $I'$ denote the complement of $I$. We have 
\[
h^0(L(-D_I)) = \chi(L(-D_I)) +h^1(L(-D_I))  = r-2k + h^0(K \otimes L^{-1}(D-D_{I'})) = r-2k + h^0(L-D_{I'}))
\]
so $W_{I'} = W_I$.
\end{note}

\begin{lemma} \label{charw} Condition (**) of Conjecture \ref{conj} and Theorem \ref{LinAlg} characterizes $W$.   In particular, $W=\tilde{Y}$. 
\end{lemma}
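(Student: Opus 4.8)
The goal is to prove Lemma \ref{charw}: condition (**) of Conjecture \ref{conj} characterizes $W = \cup_I W_I$, and hence (combined with Theorem \ref{LinAlg}, which says (*) $\Leftrightarrow$ (**), and the already-established fact that (*) characterizes $\tilde Y$) that $W = \tilde Y$. The plan is to translate membership in $W_I$ directly into the linear-algebra condition (**) applied to the vector space $V = (H^0(L)/H^0(L-D))^*$ and the points $p_{ij}$ defined in this section.

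First I would unwind the definition of $W_I$. By Riemann–Roch, $h^0(L(-D_I)) \ge r+1-k$ (with $k = \#I$) is equivalent to saying the restriction map $H^0(L) \to \bigoplus_{i \in I}(L_{x_{i+}} \oplus L_{x_{i-}})$ has a kernel of dimension $\ge h^0(L) - k + 1$; dualizing, this says the $2k$ evaluation vectors $\{q_{ij} : i \in I, j \in \pm\} \subset V$ span a subspace of dimension $< k$ (using $h^0(L) = \dim V^* $ in the good case, with the general case handled by the same dual count since $V$ is defined to kill $H^0(L-D)$). Since for each $i$ the pair $\{p_{i+}, p_{i-}\}$ and the pair $\{q_{i+}, q_{i-}\}$ span the same plane (the change of basis $p_{i\pm} = q_{i+} \pm \sqrt{-1}\,q_{i-}$ is invertible), the span of $\{q_{ij} : i \in I\}$ equals the span of $\{p_{ij} : i \in I\}$. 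Therefore $(C,D,L) \in W_I$ if and only if the $2k$ points $p_{ij}$, $i \in I$, $j \in \pm$, span a subspace of dimension $< k$ — which is exactly the clause in (**) for this particular $I$. Taking the union over all $I$ with $k < r$ gives: $(C,D,L) \in W$ $\iff$ condition (**) holds. (One should note that for $I$ with $k = r$ the condition $h^0(L(-D_I)) \ge 1$ defines a locus already contained in the union over smaller $I$, or observe that the duality Note reduces $W_{\{1,\dots,r\}}$ to $W_\emptyset = B$, so restricting to $k < r$ loses nothing.)

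Once that equivalence is in hand, the "in particular" is immediate: by Theorem \ref{LinAlg}, for every point of $M$ the associated configuration in $V$ satisfies (*) $\iff$ (**); we showed earlier in this section that (*) cuts out $\tilde Y$, and we have just shown (**) cuts out $W$; hence $W = \tilde Y$.

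The main obstacle is making the dualization bookkeeping fully correct in the possibly-bad case $h^0(L) > r$, where $V$ is a proper quotient's dual rather than all of $H^0(L)^*$. The subtlety is that $W_I$ is phrased in terms of $h^0(L(-D_I))$, an honest dimension on $H^0(L)$, while (**) is phrased inside the $r$-dimensional $V$; I need to check that $h^0(L(-D_I)) - h^0(L-D) = \dim\{v \in V : v$ kills $q_{ij}\ \forall i \in I\}$, i.e. that passing to the quotient by $H^0(L-D)$ does not disturb the count. This holds because $H^0(L-D) \subseteq H^0(L-D_I)$ for every $I$, so the relevant dimensions all shift by the same constant $h^0(L-D)$; I would state this as a short preliminary observation before running the dual count. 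Everything else is routine linear algebra and the invertible change of basis between the $q$'s and the $p$'s.
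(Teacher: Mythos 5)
Your proof proposal contains a genuine gap, and it is precisely the one the paper flags explicitly. You assert that for each $I$ with $k = \#I$, membership in $W_I$ is \emph{equivalent} to the $2k$ points $p_{ij}$ ($i\in I$, $j\in\pm$) spanning a subspace of dimension $<k$. That equivalence is only one-sided in general. The kernel of the restriction map $H^0(L)\to\bigoplus_{i\in I}(L_{x_{i+}}\oplus L_{x_{i-}})$ is $H^0(L(-D_I))$, so $W_I$ says the kernel has dimension $\ge r+1-k$; but the dimension of the span $Q$ of the $q_{ij}$ (equivalently $p_{ij}$) inside $V$ is
\[
\dim Q = r - \bigl(h^0(L(-D_I)) - h^0(L-D)\bigr),
\]
so the condition ``$\dim Q < k$'' translates back to $h^0(L(-D_I)) \ge r+1-k + h^0(L-D)$. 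When $h^0(L-D)>0$, i.e.\ at a point of $B$, this is strictly stronger than $W_I$. Your opening step already shows the slip: you say $h^0(L(-D_I))\ge r+1-k$ is equivalent to the kernel having dimension $\ge h^0(L)-k+1$, which uses $h^0(L)=r$; that is false on $B$. Your ``preliminary observation'' (that $H^0(L-D)\subseteq H^0(L(-D_I))$ so the dimensions shift uniformly) is true as stated, but it works against you rather than for you: it is exactly the uniform shift by $h^0(L-D)$ that breaks the equivalence $W_I\Leftrightarrow(\ast\ast)_I$. The case $I=\emptyset$ makes the failure stark: $W_\emptyset = B$ is nonempty (it is the object of study!) while $(\ast\ast)$ with $I=\emptyset$ is vacuously false.

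The paper's own proof does not claim this equivalence for all $I$. It proves $(\ast\ast)_I\Rightarrow W_I$ unconditionally (which your computation above also gives, since $(\ast\ast)_I$ is the stronger condition) and then explicitly notes that ``the converse fails in general, because $h^0(L)$ can be $>r$ if we are at a point of $B$,'' restricting the reverse implication to points of $\tilde Y\setminus B$, where $h^0(L)=r$ and $V=H^0(L)^*$ so the dual count is clean. The leftover case of points lying in $B$ is then covered by the surrounding results: $B\subseteq\tilde Y$ by Lemma~\ref{something}, and $\tilde Y$ is cut out by $(\ast)$, which by Theorem~\ref{LinAlg} is equivalent to $(\ast\ast)$. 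To repair your argument you would need either to split into the cases $h^0(L-D)=0$ and $h^0(L-D)>0$ as the paper does, or to give a separate direct argument that every point of $B$ satisfies $(\ast\ast)$ for some nonempty $I$ (which is essentially the content of the linear-algebra theorem applied to $B\subseteq\tilde Y$).
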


\begin{proof} Let $k := \#I$.
The condition that the $2k$ points $p_{ij}, \  i \in I, \  j \in \pm$ are contained in a subspace of dimension $k-1$ implies that we are in $W_I$.
The converse fails in general, because $h^0(L)$ can be $>r$ if we are at a point of $B$.
However, if we are at a point of $\tilde{Y}$ that is not in $B=W_{\emptyset}$, then $h^0(L)=r$, $H^0(L-D)=0$ and $V=H^0(L)^*$, and then the condition that the $2k$ points $p_{ij}, \  i \in I, \  j \in \pm$ are contained in a subspace of dimension $k-1$ is equivalent to being in $W_I$. 
So Condition (**) of  Theorem \ref{LinAlg} characterizes $W$. 
\end{proof}

The proof of Theorem \ref{Main} is now reduced to the following result: 


\begin{lemma} \label{nost}
We have the inclusion and equality:
\[
B=W_{\emptyset} =  \cup_I W_I = W \subset Y.
\] 
In particular, $B=Y=\tilde{Y}$. 
\end{lemma}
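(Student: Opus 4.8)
The plan is to prove the chain of equalities $B = W_\emptyset = \bigcup_I W_I = W$ by establishing a single containment in each direction, after which the final assertion $B = Y = \tilde Y$ follows by combining Lemma \ref{something} (which gives $B \subseteq Y \subseteq \tilde Y$) with Lemma \ref{charw} (which identifies $\tilde Y = W$). The equality $B = W_\emptyset$ is immediate from the definition $W_\emptyset = \{h^0(L(-D_\emptyset)) \ge r+1\} = \{h^0(L) \ge r+1\} = B$, and $W_\emptyset \subseteq \bigcup_I W_I = W$ is trivial. So the entire content is the reverse inclusion $W \subseteq B$, i.e. that for every subset $I \subseteq \{1,\dots,r\}$ one has $W_I \subseteq W_\emptyset = B$.

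So the heart of the argument is: if $h^0(L(-D_I)) \ge r+1-k$ where $k = \#I$, then $h^0(L) \ge r+1$. First I would dispose of the trivial direction: since $D_I \le D$ and $L(-D_I) \subseteq L$, adding back the $2k$ points of $D_I$ can increase $h^0$ by at most $2k$, giving $h^0(L) \le h^0(L(-D_I)) + 2k$ — but that's the wrong inequality. The useful estimate goes the other way: imposing the vanishing conditions at the $2k$ points of $D_I$ drops $h^0$ by \emph{at most} $2k$, so $h^0(L) \ge h^0(L(-D_I))$, which only gives $h^0(L) \ge r+1-k$, not enough. The extra input must come from the twisted spin condition \eqref{spin} together with the duality noted in the paper: $W_{I'} = W_I$ where $I'$ is the complement. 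The point is that the ``bad'' behavior at a point of $W_I$ is Serre-dual to bad behavior involving $D_{I'}$, and one should play $h^0(L(-D_I))$ against $h^0(L(-D_{I'}))$ using Riemann--Roch on both, exactly as in the Note, to leverage the self-duality of $L^{\otimes 2} = K_C(D)$.

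Concretely, I expect the argument to run as follows. Assume $(C,D,L) \in W_I$, so $h^0(L(-D_I)) \ge r+1-k$. If additionally the point lies in $B$ we are done, so suppose $h^0(L) = r$, hence $H^0(L-D) = 0$ and $V = H^0(L)^*$ has dimension exactly $r$. Now Lemma \ref{charw} tells us that being in $W_I$ (under the assumption $h^0(L) = r$) is equivalent to the $2k$ evaluation vectors $p_{ij}$, $i \in I$, $j \in \pm$, spanning a subspace of dimension $< k$ inside the $r$-dimensional space $V$ — i.e. condition (**) of Conjecture \ref{conj}/Theorem \ref{LinAlg} holds. By Theorem \ref{LinAlg} (the linear algebra result, valid since $\dim V = r$), condition (**) is equivalent to condition (*): for every sign choice $\sigma$, the $r$ vectors $p_{i,\sigma(i)}$ are linearly dependent, i.e. lie in a hyperplane of $V$. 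But as established in section \ref{translation}, condition (*) is exactly the statement that $(C,D,L,\nu) \in Y'$ for the pairing $\nu$ used to build the $p_{ij}$ — equivalently $(C,D,L) \in \tilde Y$. Thus $W_I \subseteq \tilde Y$, hence $W \subseteq \tilde Y$; combined with $\tilde Y = W$ from Lemma \ref{charw} this is consistent, but to close the loop to $B$ I instead note that the whole point is that $W \subseteq \tilde Y$ and $\tilde Y \subseteq W$ give $W = \tilde Y$, while $Y \subseteq \tilde Y = W$ and $B = W_\emptyset \subseteq W = \tilde Y$; it remains only to show $W \subseteq B$, which forces all these loci to coincide.

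The main obstacle — and the step I would spend the most care on — is precisely the implication $W_I \subseteq B$, i.e. ruling out the possibility that some $W_I$ with $k < r$ is strictly larger than $B$. The resolution should be: a point of $W_I \setminus B$ would have $h^0(L) = r$, and then by the equivalence (**)$\iff$(*) of Theorem \ref{LinAlg} it would lie in $\tilde Y$; but one must then show such a point actually lies in $B$ after all, which seems to require that at a point of $\tilde Y$ with $h^0(L) = r$ the linear-dependence condition (*) cannot in fact hold for all $\sigma$ unless one is genuinely in $B$ — or, more likely, the correct reading is that Lemma \ref{charw} already gives $W = \tilde Y$ and Lemma \ref{something} gives $B \subseteq Y \subseteq \tilde Y = W$, so the remaining task is the single inclusion $W \subseteq B$, proved by induction on $k = \#I$ using the duality $W_{I'} = W_I$ to reduce to smaller $I$ and ultimately to $W_\emptyset = B$. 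I would therefore structure the proof as an induction: for $k = 0$ it is definitional; for $0 < k \le r$, use the Note's Riemann--Roch computation to replace $I$ by its complement $I'$ when $k > r/2$, and for $k \le r/2$ extract from $h^0(L(-D_I)) \ge r+1-k > 0$ a section vanishing on $D_I$, then argue — via the twisted spin structure and a base-point/general-position analysis of $L(-D_I)$ versus $L(-D)$ — that one can promote it to a section of $L(-D)$ itself, landing in $B = W_\emptyset$.
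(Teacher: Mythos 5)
You correctly reduce the lemma to the single inclusion $W_I \subseteq B$ for each $I$, and in your second paragraph you put your finger on the right ingredient: the duality $W_{I'} = W_I$ coming from the self-duality $L^{\otimes 2} = K_C(D)$, and the idea of playing $h^0(L(-D_I))$ against $h^0(L(-D_{I'}))$. That is indeed the engine of the paper's proof. But you never actually carry out that plan; instead, in your final paragraph you pivot to an induction on $k = \#I$ whose inductive step hinges on ``promoting'' a section of $L(-D_I)$ to a section of $L(-D)$ by a vague ``base-point/general-position analysis.'' That promotion step is not justified and, as written, would not work: there is no reason a nonzero section of $L(-D_I)$ should extend to vanish on all of $D$, and general-position arguments are unavailable since the whole point of $W_I$ is that the configuration is special.

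The paper's actual argument closes the loop with a short counting contradiction, not an induction. Suppose $(C,D,L)\in W_I\setminus B$, so $h^0(L)=r$ and (taking $1\le k=\#I\le r-1$, the cases $k=0,r$ being $W_\emptyset=B$) the condition $h^0(L(-D_I))\ge r+1-k$ says $D_I$ imposes at most $k-1$ conditions on $H^0(L)$. Applying the Note's duality $W_I=W_{I'}$ gives $h^0(L(-D_{I'}))\ge k+1$, so $D_{I'}$ imposes at most $r-k-1$ conditions. Since $D=D_I+D_{I'}$, altogether $D$ imposes at most $r-2$ conditions, giving $h^0(L(-D))\ge 2$; but then
\[
\chi(L)=h^0(L)-h^0(L(-D))\le r-2<r,
\]
contradicting \eqref{Euler}. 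Your detour through Lemma \ref{charw} and Theorem \ref{LinAlg} in the third paragraph is also not the right tool here: it establishes $W=\tilde Y$, which you already cite, but does nothing toward $W\subseteq B$ and risks circularity. The gap in your proposal is the missing contradiction with $\chi(L)=r$; once you add the $D_I$ and $D_{I'}$ condition counts and compare with the Euler characteristic, the inclusion $W_I\subseteq B$ falls out immediately and no induction or section-promotion is needed.
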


\begin{proof} 

A point of $W_I$ is given by $(C,D,L) $ such that $h^0(L(-D_I)) \geq r+1 - \#I$. 
For elements of $H^0(L(-D_I))$, 
the conditions for being in some $Y''_{\sigma}$ are clearly satisfied automatically at points of $I$  since such an element vanishes along the divisor $D_I$,
so we only need to impose $r-\#I$ conditions at the remaining points in $D - D_I$,  
ending with $h^0(L_{\sigma}) \geq 1$. Thus $W = \cup_I W_I \subset Y$.

On the other hand, each $W_I$ is contained in $B$. 
Indeed, say we have a point of $W_I$ that is not in $B$. 
Then $h^0(L)=r$ and the divisor $D_I$ imposes at most $k-1$  conditions on sections of $L$, where we assume $1 \leq k:=\#I \leq r-1$.
But the equality $W_I = W_{I'}$ implies that the divisor $D_{I'}$ imposes at most $r-k-1$ conditions on sections of $L$. 
So $h^0(L-D) \geq r - (k-1) -(r-k-1) =2$, contradicting \eqref{Euler}.

\end{proof}

\section{Dimension counts}
Among general line bundles $L$ of degree $g-1+r$ on a curve $C$ of genus $g$, for $0 \leq r <g$, the locus where $h^0(L)>r$ has codimension $r+1$ by Brill-Noether theory \cite{BN,ACGH}. One might therefore naively expect that the bad locus $B$ should have codimension $r+1$ in $M$. The residue theorem implies that this expectation is wrong: it is off by at least 1. In this section we show that there are no further corrections:

\begin{theorem} \mbox{   } \label{dim}

\begin{itemize}
\item In the unpunctured case $r=0$, the moduli space $M=M^+ \cup M^-$ is reducible. The bad locus $B$ consists of all of $M^-$ plus the irreducible ``vanishing thetanull" divisor in $M^+$.
\item For $1 \leq r < g$, every component of $B$ has codimension $\leq r$ and there is a component of codimension exactly $r$.
\item For $r \geq g$ the bad locus is empty.
\end{itemize}
\end{theorem}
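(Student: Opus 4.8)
The plan is to treat the three bullets of Theorem~\ref{dim} separately, since they rely on rather different inputs.

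\medskip

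\textbf{The case $r=0$.} Here $L$ is an honest theta characteristic on $C$, and $M$ splits into the even component $M^+$ and the odd component $M^-$ according to the parity of $h^0(L)$. On $M^-$ we have $h^0(L)\ge 1=r+1$ automatically, so all of $M^-$ lies in $B$. On $M^+$ the condition $h^0(L)\ge 1$ means $h^0(L)\ge 2$ (parity), and I would quote the classical fact (Teixidor, Harris, or the Brill--Noether/theta-null literature cited as \cite{ACGH}) that the locus of even spin curves with a nonzero section — the ``vanishing thetanull'' divisor $\theta_{\mathrm{null}}$ — is an irreducible divisor in $M^+$. This identifies $B$ as claimed.

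\medskip

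\textbf{The case $r\ge g$.} This is a pure degree count. For $r\ge g$ we have $\deg L = g-1+r \ge 2g-1$, so $L$ is nonspecial: $h^1(L)=0$ and hence $h^0(L)=\deg L - g + 1 = r$ for \emph{every} $(C,D,L)\in M$. Thus $h^0(L)\ge r+1$ never occurs and $B=\emptyset$.

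\medskip

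\textbf{The case $1\le r<g$.} This is the substantive one. For the upper bound, the cleanest route is via Lemma~\ref{nost}: $B=W=\cup_I W_I$, so it suffices to bound $\operatorname{codim} W_I$ for each $I$. Fixing $I$ with $k=\#I$, a point of $W_I$ carries a section of $L(-D_I)$, a bundle of degree $g-1+r-2k$; the Brill--Noether-type expected codimension of the locus where $h^0 \ge r+1-k$ inside the relevant Picard-type family is governed by the Brill--Noether number, but here one must be careful because we are also varying $C$ and the $2r$ marked points, and because $L$ is constrained to be a twisted spin bundle. The key simplification is that $W_I = W_{I'}$ from the Note, which lets us assume $k\le \lfloor r/2\rfloor$; the single effective section plus the residue/spin constraints cut down the naive codimension $r+1-k$ by exactly the amount needed to land at $\le r$. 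I would make this precise by a tangent-space computation at a general point of $W_I$, using the fact that moving $C$, $D$, and $L$ gives enough freedom; alternatively one can induct on $r$, degenerating a pair of punctures. For the \emph{existence} of a codimension-exactly-$r$ component, the natural candidate is $W_\emptyset = B$ restricted to a locus where $L(-D)$ becomes effective in the mildest possible way, or more concretely: take $C$ general, $x_1,\dots,x_{2r}$ general, and impose that some fixed section of a theta characteristic-type bundle pass through a single extra point; dimension-counting the resulting incidence variety, using $\chi(L)=r$ so that $h^0(L)\ge r+1$ is one linear condition beyond the $r$ forced ones, gives codimension $r$ once one accounts for the residue theorem's free unit of slack.

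\medskip

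\textbf{Main obstacle.} The hard part will be the lower-bound half of the middle bullet — exhibiting an honest component of $B$ of codimension exactly $r$ — because it requires a genuine existence/dimension argument (producing a family of the right dimension and checking it is not contained in a larger component), rather than the essentially formal upper bound. The cleanest handle is likely an explicit degeneration or a direct parameter count on $M$ itself using $\chi(L)=r$, so that ``bad'' is literally one Brill--Noether condition shifted by the universal residue relation; making sure the expected dimension is actually achieved (i.e.\ the relevant determinantal locus is not empty or excess-dimensional) is where the real work lies, and may again route through the linear-algebra picture of Section~\ref{translation} to guarantee transversality.
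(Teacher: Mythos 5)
Your treatment of the first and third bullets matches the paper: for $r=0$ one quotes the classical spin-parity and irreducibility of the vanishing thetanull divisor, and for $r\geq g$ one observes $\deg L \geq 2g-1$ forces $h^1(L)=0$, hence $h^0(L)=r$ everywhere. That part is fine.

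The middle bullet, however, is where the real content lives, and there your proposal is a plan, not a proof. You gesture at the right ingredient — ``the residue theorem's free unit of slack'' — but you do not actually produce an argument, and your proposed route through $W=\cup_I W_I$ from Lemma~\ref{nost} is not what the paper does and is not obviously workable: you would need to control the codimension of each $W_I$ inside a family where $C$, $D$, and the spin constraint all vary simultaneously, and the duality $W_I=W_{I'}$ alone does not close the computation. The paper instead introduces the $2r$-sheeted cover $\tilde M = \{(C,D,L,p) : p\in D\}$ and the map
\[
f:\tilde M \to \operatorname{Pic}^{g-r}(\CC/M), \qquad (C,D,L,p)\mapsto N:=L(p-D),
\]
and shows $\tilde B = f^{-1}(G^0_{g-r})$ where $G^0_{g-r}$ is the effective locus of codimension $r$. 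The crucial equivalence is one-line but essential: a nonzero $s\in H^0(N)$ gives $s^2\in H^0\bigl(K(-(D-p))(p)\bigr)$ with a possible pole only at $p$, and the residue theorem kills that pole, forcing $s\in H^0(L-D)$, i.e.\ $h^0(L)>r$. This is how the expected codimension $r+1$ drops to $r$; your proposal never makes this precise.

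For the lower bound (a component of codimension exactly $r$), you again flag the difficulty but leave it open. The paper gives a concrete construction: take a generic odd theta characteristic $A$ with $h^0(A)=1$ and effective divisor $a$, choose $d$ as a sum of $r$ points of $a$, $p\in d$ with $p\notin a-d$, and set $D:=2d$, $L:=A(d)$; then $N=A(-(d-p))$, and transversality of $f(\tilde M)$ to $G^0_{g-r}$ at this point is verified by a tangent-space computation in the Jacobian (comparing the span of the osculating data along $D$ with the span of Abel--Jacobi tangents at $a-d+p$), which reduces to checking that $a+d+p$ lies in no canonical divisor — true because the only canonical divisor containing $a$ is $2a$. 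So the gap in your proposal is precisely what you predicted: the middle bullet requires a concrete map to a Picard variety, a residue-theorem identity, and an explicit transversality check, none of which your sketch supplies.
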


\begin{proof}
The case $r=0$ is well known: $M^-$ is the moduli of odd spin bundles $L$, where $h^0(L)$ is odd and in particular $h^0(L) \geq 1$.
$M^+$ is the moduli of even spin bundles $L$, for which $h^0(L)=0$ generically and $h^0(L)$ is even and $\geq 2$ along the irreducible divisor of vanishing thetanulls \cite{Atiyah, Mumford}.

For $r \geq g$, the degree of a twisted spin bundle $L$ is $\geq 2g-1$, so $h^1(L)=0$ and $h^0(L)=r$, so the bad locus is empty.

Assume $1 \leq r < g$. To see that the codimension is at most $r$ (and thus off by 1 from the Brill-Noether expectation), it is convenient to switch to the $2r$-sheeted cover 
\[
\tilde{M} := \{(C,D,L,p) \ | \ (C,D,L) \in M \ {\text{and}} \ p \in D \}
\]
and to $\tilde{B}$, the inverse image there of $B$.
Consider the map
\[
f: \tilde{M} \to {\text{Pic}}^{g-r}(\CC /M),
\]
sending 
\[
(C,D,L,p) \mapsto N:= L(p-D).
\]
Here $\CC$ is the universal curve over $M$. In the Picard, the effective locus 
\[
G^0_{g-r} := \{ N \ | \ h^0(N)>0 \}
\]
has codimension $r$. We claim that 
\[
\tilde{B}=f^{-1}(G^0_{g-r}),
\]
which implies that the codimension is at most $r$. 

Indeed in one direction, 
\[
L \in \tilde{B} \iff h^0(L)>r \iff h^0(L(-D)) >0 \implies h^0(N)>0.
\]
But conversely, a non-zero section $s \in H^0(N)$ gives a meromorphic 1-form 
\[
s^2 \in H^0(K(-(D-p))(p))
\]
with a possible pole only at $p$. By the residue theorem, it actually cannot have a pole there, so $s$ must be a non-zero section in $H^0(N-p) = H^0(L-D)$.

To conclude, we need to exhibit a component of codimension $r$. Let $A$ be a generic odd theta characteristic: 
\[
A^{\otimes 2} = K, h^0(A)=1.
\]
Let $a \in {\text{Sym}}^{g-1}(C)$ be the unique divisor of a non-zero section of $A$, and let $d \in {\text{Sym}}^{r}(C)$ be the sum of any $r$ of these points. (Recall we are assuming $r \leq g-1$.) Let $p \in d$ be one of these points. Generically such $p$ is not in $a-d$, which we will assume.
Take $D:=2d, L:=A(d)$. We claim that $f(\tilde{M})$ is transversal to $G^0_{g-r}$ at $f(C,D,L)$, so the component of $\tilde{B}$ through $(C,D,L)$ has codimension $r$.

This is seen by comparing the tangent spaces to $f(\tilde{M})$ and $G^0_{g-r}$ at 
\[
f(C,D,L)=N:= L(p-D)=A(d+p-2d)=A(-(d-p)).
\]
The tangent there to $G^0_{g-r}$ is spanned by the $g-r$ tangent lines to the Abel-Jacobi image of $C$ in its Jacobian at the points of $a-d+p$, or projectively by the corresponding $g-r$ points of the canonical image of $C$ in canonical space. 
The tangent to $f(\tilde{M})$ is likewise spanned by the $2r$ tangent lines to the Abel-Jacobi (or: canonical) image of $C$ at the points of $D=2d$. (Since our $D$ is non-reduced, this really means the $r$ osculating planes to $C$ at the points of $d$.) This follows from the relation 
\[
N^{\otimes 2} = K(-(D-p)+p),
\] 
which implies that the differential of $f$ at each of these points is $\pm \frac{1}{2}$ times the corresponding differential of Abel-Jacobi, so the spans are the same.

The transversality is therefore equivalent to saying that the divisor
\[
2d + (a-d+p) = a+d+p
\] 
is not contained in any canonical divisor. From our assumption that $h^0(A)=1$ it follows that the unique canonical divisor containing $a$ equals $2a$. This does not contain $a+d+p$ because of our assumption that $p \notin a-d$.

\end{proof}

\begin{appendices}
\label{appendix} 

 \section{Proof of the Linear Algebra Result} \label{pf}

We rephrase Theorem \ref{LinAlg} as follows:
   \begin{theorem} \label{main} Let $\{P_1, \dots, P_r \}, \ P_i \neq 0$ be a set of nonzero subspaces of the $r$ dimensional vector space $V= ( H^0(L) / H^0(L-D) )^*$ such that $\on{dim} P_i \le 2$.  For each $P_i$, fix two spanning  vectors $ \{p_i, p_i' \}$ (with one necessarily redundant if $\on{dim} P_i=1$) and define
\[ V_r= \{ \{ v_1, \dots, v_r \}  \  | \  v_i \in \{p_i, p_i' \} \}.  \]
If \textbf{no} element of $V_r$ spans $V$, then there exists a subset  $\{P_{i_1}, \dots, P_{i_k} \}$ for $k \le r$ such that 
\[  \on{dim} \on{span} \{ P_{i_1}, \dots, P_{i_k} \} < k. \] 

\end{theorem}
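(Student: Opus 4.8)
The statement is the linear matroid case of Rado's theorem on independent transversals, and the plan is to give a self-contained proof by induction on $r$. It is convenient to prove the contrapositive in the following sharpened form: \emph{if $\dim\operatorname{span}\{P_i : i\in I\}\ge |I|$ for every nonempty $I\subseteq\{1,\dots,r\}$, then some choice $v_i\in\{p_i,p_i'\}$ yields a basis of $V$.} Granting this, if no element of $V_r$ spans $V$ then, since $r$ vectors span the $r$-dimensional space $V$ exactly when they form a basis, no element of $V_r$ is a basis; hence the ``Hall condition'' above must fail for some nonempty $I$, and that $I$, with $k:=|I|\le r$, is the deficient subset required by \ref{main}.

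\textbf{The induction.} The base case $r=1$ is immediate: $P_1=V$ is a line and at least one of $p_1,p_1'$ spans it. For $r\ge 2$ I would split into two cases. \textbf{Case A:} every proper nonempty $I$ satisfies the strict inequality $\dim\operatorname{span}\{P_i:i\in I\}\ge|I|+1$. Rename within the pair for index $r$ so that $p_r\ne 0$, and pass to the quotient $\overline V=V/\langle p_r\rangle$ with images $\overline{P_1},\dots,\overline{P_{r-1}}$. The strict inequality forces each $\overline{P_j}\ne 0$, each has dimension $\le 2$, and for nonempty $J\subseteq\{1,\dots,r-1\}$ one has $\dim\operatorname{span}\{\overline{P_j}:j\in J\}\ge\dim\operatorname{span}\{P_j:j\in J\}-1\ge|J|$, so the hypothesis descends; the inductive hypothesis then supplies $v_1,\dots,v_{r-1}$ whose images form a basis of $\overline V$, and $p_r,v_1,\dots,v_{r-1}$ is a basis of $V$. \textbf{Case B:} some proper nonempty $I$ is tight, $\dim\operatorname{span}\{P_i:i\in I\}=|I|=:k$. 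Put $U:=\operatorname{span}\{P_i:i\in I\}$. Apply the inductive hypothesis inside $U$ to $\{P_i\}_{i\in I}$ to get $v_i$ $(i\in I)$ forming a basis of $U$, and apply it in $V/U$ to the images $\overline{P_j}$ $(j\notin I)$ to get $v_j$ $(j\notin I)$ whose images form a basis of $V/U$; the union $\{v_i\}_{i=1}^{r}$ then spans, hence is a basis of, $V$.

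\textbf{Main obstacle.} The only step with real content is the descent in Case~B: one must check that the spanning condition transfers to the quotient, namely $\dim\operatorname{span}\{\overline{P_j}:j\in J\}=\dim\operatorname{span}\{P_\ell:\ell\in I\cup J\}-\dim U\ge |I\cup J|-k=|J|$ (the equality holds because $\operatorname{span}\{\overline{P_j}:j\in J\}=(\operatorname{span}\{P_j:j\in J\}+U)/U$ and $I\cap J=\emptyset$; the inequality is the Hall condition applied to the larger index set $I\cup J$), and that no $\overline{P_j}$ collapses to $0$ (otherwise $I\cup\{j\}$ would violate the Hall condition), so that the inductive hypothesis really applies to strictly smaller data of the same shape. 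A minor bookkeeping point: in the recursive calls the ``two spanning vectors'' of a quotient subspace are just the images $\overline{p_j},\overline{p_j'}$, one of which may be zero or redundant when $\dim\overline{P_j}=1$, exactly as the statement permits. Everything else — the base case, the transfer of the hypothesis in Case~A, and reassembling a basis of $V$ from a basis of $U$ and a basis of $V/U$ — is routine.
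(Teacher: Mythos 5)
Your proof is correct, but it takes a genuinely different route from the paper's. You recognize the statement as the linear matroid case of Rado's theorem on independent transversals and argue by the classical induction on $r$, proving the sharpened contrapositive (a Hall-type condition $\dim\operatorname{span}\{P_i : i\in I\}\ge|I|$ for all nonempty $I$ implies a transversal basis exists) with the standard dichotomy: if the Hall inequality is strict for every proper subset, quotient by a single line and descend; if some proper subset $I$ is tight, split the problem into $U=\operatorname{span}\{P_i:i\in I\}$ and $V/U$ and solve each piece. The key verifications — that the Hall inequality survives the quotient, that no $\overline{P_j}$ collapses to zero — are exactly the right things to check, and you carry them out correctly. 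The paper's proof, by contrast, is direct and constructive: it fixes a section $v$ of maximal span dimension $\rho<r$, extracts a basis $w$ of that span drawn from $v\cup v'$, observes (Lemma \ref{maxspancontainspi}) that every $P_i$ with $i$ in the excess index set $[w^c]$ must lie in $\operatorname{span} w$ by maximality of $\rho$, and then iterates a ``minimal set'' closure operation (Lemmas \ref{pthreeandpd}, \ref{special}) to produce a set $\Upsilon\subseteq w$ of cardinality $h$ whose span already contains the $h$ planes indexed by $[\Upsilon]$ plus the $|[w^c]|\ge 1$ extra planes from $w^c$, yielding a deficient subset of size $h+|[w^c]|>\dim\operatorname{span}\Upsilon$. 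Your approach is shorter, fits the result into the established framework of transversal theory, and the tight-set dichotomy is arguably more transparent; the paper's approach buys an explicit construction of the deficient subset starting from any maximal section, which is the flavor the authors exploit when reducing the geometric statement about the bad locus to this linear algebra.
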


Let $\rho = \on{max}\{ \ \on{dim}(\spann v) \ | \  \forall v \in V_r \}$ and let $\vrho = \{v \in V_r  \ | \on{dim} \spann v = \rho \}$. 
For each $v \in V_r$ let the set $v' \in V_r$ be defined by the rule that if $v_i \in v$ then $v_i' \in v'$. 
Let
 \[ v_{-j} = \{v_1, \dots, \hat{v}_{j},  \dots, v_r \}, \ \ v_{+i'} = \{v_1, \dots, v_r,v_i'\}.  \] 
For $A \subset (v \cup v') \cap W$, the square brackets $[A]$ will denote the indices of the elements in $A$.
For example, if $A=\{v_1, v_1', v_2, v_3 \}$, then $[A]$ is the multi-set $\{1,1,2,3\}$ where the two copies of $1$ correspond to $v_1$ and  $v_1'$. 
We say that $i \in [A]$ is a \emph{double} if $v_i \in A$ and $v_i' \in A$.

\begin{definition} Let $v \in \vrho$, let $W= \spann v$, and let $W'=\spann v'$. We say that $ w \subset (v \cup v') \cap W $ is a \emph{spanning set} if it satisfies the following conditions: (1) $\spann w=W$, (2) $\on{card}w=\rho$, and (3) $[w]$ contains no doubles. 
\end{definition}

Note that a spanning set $w$ is just a certain choice of basis for $W$. We let $\on{max}(v)$ denote the set of all spanning sets in $(v \cup v') \cap W$ and let $w^c \subset v$ denote the complement of $w$.

\begin{lemma} \label{maxspancontainspi} Let $v \in \vrho$ and let $w \in \maxv$. If $i \in [w^c]$, then  $P_i \subset \spann w$.
\end{lemma}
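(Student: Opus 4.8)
The plan is to argue by contradiction: suppose $i \in [w^c]$ but $P_i \not\subset \spann w = W$. Since $v \in \vrho$ we have $v_i \in W$, so the only way $P_i \not\subset W$ is if $\dim P_i = 2$ and $v_i' \notin W$. The strategy is to use this stray vector $v_i'$ to manufacture an element of $V_r$ spanning a subspace of dimension $> \rho$, contradicting the maximality of $\rho$.

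Here is the mechanism I would use. Since $w$ is a spanning set, it is a basis of $W$; as $i \in [w^c]$ we know $v_i \notin w$ (indeed $v_i \in w^c$). Consider swapping $v_i$ for $v_i'$ in the tuple $v$, i.e. form the modified tuple $\tilde v := v_{-i,+i'}$ in the notation of the excerpt (replace the $i$-th entry $v_i$ by $v_i'$). I claim $\spann \tilde v$ has dimension $\rho + 1$. Indeed $\spann \tilde v \supseteq \spann(w \cup \{v_i'\})$: every $v_j$ with $j \ne i$ lies in $W = \spann w$ (again using $v \in \vrho$), hence lies in $\spann \tilde v$, so $W \subseteq \spann \tilde v$; and $v_i' \in \tilde v$. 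Since $v_i' \notin W$, we get $\dim \spann \tilde v \geq \dim W + 1 = \rho + 1$. But $\tilde v \in V_r$, so $\rho \geq \dim \spann \tilde v \geq \rho + 1$, a contradiction. Therefore $v_i' \in W$, and since also $v_i \in W$, we conclude $P_i = \spann\{v_i, v_i'\} \subset W = \spann w$.

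The one point that needs care — and the step I expect to be the main (minor) obstacle — is the clean verification that $v_j \in W$ for \emph{all} $j \ne i$, i.e. that replacing one coordinate keeps the rest inside $\spann w$. This is exactly where $v \in \vrho$ (not merely $v \in V_r$) is essential: $\spann v = W$ by definition of $\maxv$ applying to $v$, so trivially every $v_j \in W$. One should also double-check the degenerate possibility $\dim P_i = 1$, in which case $v_i' $ is a scalar multiple of $v_i \in W$, so $P_i \subset W$ immediately and there is nothing to prove. With those bookkeeping checks in place the argument is complete.
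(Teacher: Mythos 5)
The proposal has a genuine gap at the step ``so $W \subseteq \spann \tilde v$.'' What you have verified is that each $v_j$ with $j\ne i$ lies in $W$ and also lies in $\spann\tilde v$ (trivially, since $v_j\in\tilde v$). From this you can only conclude $\spann\{v_j : j\ne i\}\subseteq\spann\tilde v$, \emph{not} $W\subseteq\spann\tilde v$: the tuple $\tilde v$ omits $v_i$, and $v_i$ need not lie in $\spann(\{v_j:j\ne i\}\cup\{v_i'\})$. In fact the hypothesis that $\rho$ is maximal forces this to go wrong in exactly the situation you are trying to handle: if $v_i'\notin W$ and $v_i$ \emph{were} in $\spann\{v_j:j\ne i\}$, then $\spann\tilde v = W + \C v_i'$ would already have dimension $\rho+1$, violating the maximality of $\rho$; so under your assumptions one is pushed into the complementary case $\dim\spann\{v_j:j\ne i\}=\rho-1$, where $\dim\spann\tilde v\le\rho$ and no contradiction arises. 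The issue you flagged as the ``main obstacle'' (showing each $v_j\in W$) is in fact trivial; the real obstacle is that replacing $v_i$ by $v_i'$ inside $v$ can lose a dimension of $W$.

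The paper avoids this by working with the spanning set $w$ rather than with $v$ itself. Since $i\in[w^c]$ and (in the nontrivial case) $v_i'\notin W$, neither $v_i$ nor $v_i'$ appears in $w$, so $i\notin[w]$. The set $w\cup\{v_i'\}$ then has $\rho+1$ elements, no doubles, and dimension $\rho+1$; completing it to an element of $V_r$ by filling in the remaining indices gives an element whose span has dimension at least $\rho+1$, contradicting the maximality of $\rho$. The crucial difference from your construction is that $w$ already spans $W$ without using $v_i$, so nothing is lost when $v_i'$ is appended — whereas your $\tilde v$ throws away $v_i$, which is precisely the vector one cannot afford to discard.
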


\begin{proof} Suppose $P_i \not \subset \spann w$. Then $v_i' \not \in \spann w$, or $v_i \not \in \spann w$. WLOG, let  $v_i' \not \in \spann w$ and note that, under this assumption, we have that $\dim \spann w_{+i'}= \rho +1$.  Recall that we assumed that the maximal dimension of any element in $\vrho$ is equal to $\rho$. Since $w_{+i'} \in \vrho$, the equality $\dim \spann w_{+i} =\rho +1$ contradicts the maximality of $\rho$ and thus, $P_i \subset \spann w$. 
\end{proof} 

\begin{definition} \label{minset}Let $v \in \vrho$ and let $P_i \subset W=\spann v$. We say that a nonzero subset $\minpi{P_i} \subset v$ is a \emph{minimal set for $P_i$} if $P_i$ is contained in $\on{span} \minpi{P_i}$ but there exists no proper subset of $\minpi{P_i}$ whose span contains $P_i$.

\end{definition} 

If we fix $w \in \maxv$, then there is a unique minimal set $\minipi \subset w$ for $P_i= \spann \{v_i, v_i' \}$  consisting of the unique set of generators in $w$ for $v_i$ and $v_i'$, e.g, if  \[ v_i' = a_1 v_{i_1} + \cdots + a_{\rho} v_{i_{\rho}},  \]
is the linear combination for $v_i'$ in terms of the basis $w= \{v_{i_1}, \dots, v_{i_{\rho}} \}$, then $v_{i_1}, \dots, v_{i_{\rho}} \in \minipi$, and similarly for $v_i$. 
\begin{lemma} \label{pthreeandpd} Fix $w \in \maxv$ and let $i \in [\redw]$.  Let $\minipi \subset  W$ be minimal  for $P_i$, then for each $j \in [\minipi]$, we have that $P_j \subset W$.
Furthermore, 
if $\minpj$ is a minimal set for $P_j$ and $l \in [\minpj]$, then $P_l \subset W$. More generally, any $n$-th descendant of $P_i$ is contained in $W$. 
\end{lemma}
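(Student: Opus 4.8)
The plan is to proceed by induction on the ``generation'' of a descendant, with the base case being the first claim (every $j \in [\minipi]$ satisfies $P_j \subset W$) and the inductive step being essentially a verbatim repetition of the base case argument applied one level down.

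First I would establish the base case. Fix $w \in \maxv$ and $i \in [\redw]$. By Lemma \ref{maxspancontainspi}, since $i \in [\redw] = [w^c]$, we already know $P_i \subset \spann w = W$, so the minimal set $\minipi \subset w$ exists. Now take $j \in [\minipi]$. The key point is to produce a spanning set $w' \in \maxv$ with $j \in [(w')^c]$; then Lemma \ref{maxspancontainspi} applied to $w'$ immediately gives $P_j \subset \spann w' = W$. To build $w'$: since $v_j \in \minipi$ is a \emph{necessary} generator (removing it from $\minipi$ would leave a proper subset whose span no longer contains $P_i$), one can swap $v_i$ (or $v_i'$, whichever lies in $\minipi$ — note at least one does, and because $w$ has no doubles, exactly which one is determined) into $w$ in place of $v_j$. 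Concretely, form $\tilde w := w_{-j} \cup \{v_i\}$ (adjusting notation to the case at hand): I claim $\spann \tilde w = W$. Indeed $v_i \notin \spann(w \setminus \{v_j\})$ precisely because $v_j$ is irredundant in the minimal expansion of $P_i$, so replacing $v_j$ by $v_i$ keeps the span full-dimensional, hence equal to $W$; and $\on{card}\tilde w = \rho$; and one must check $[\tilde w]$ has no doubles — this holds since $v_i' \notin w$ (as $v_i, v_i' \in w^c$ when $i \in [w^c]$... here one must be slightly careful: $i \in [w^c]$ means at least one of $v_i,v_i'$ is in $w^c$; since $v_i \in \minipi \subset w$ would be a contradiction unless... ) — so the cleanest formulation is to take whichever of $v_i, v_i'$ lies outside $w$. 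After this swap, $w'$ is a spanning set containing $v_i$ (or $v_i'$) and \emph{not} containing $v_j$; hence $j \in [(w')^c]$ and Lemma \ref{maxspancontainspi} finishes the base case.

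For the inductive step, suppose we have shown that every $n$-th descendant $P_l$ of $P_i$ satisfies $P_l \subset W$. Let $P_m$ be an $(n+1)$-st descendant, so $P_m$ arises from some $n$-th descendant $P_l$ via $m \in [\iota(P_l)]$ for a minimal set $\iota(P_l)$ of $P_l$. By the inductive hypothesis $P_l \subset W$, so a minimal set $\iota(P_l) \subset w$ exists (it can be taken inside any fixed $w \in \maxv$, by the remark preceding the lemma). Now run exactly the base-case argument with the pair $(l, m)$ in place of $(i, j)$: swap a generator of $P_l$ into $w$ in place of $v_m$ to obtain a spanning set $w''$ with $m \in [(w'')^c]$, and conclude $P_m \subset \spann w'' = W$ by Lemma \ref{maxspancontainspi}. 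This completes the induction, and in particular the ``furthermore'' sentence (the case of $P_l$ with $l \in [\iota(P_j)]$ is the $n=2$ instance) follows.

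The main obstacle I anticipate is the bookkeeping in the swap construction — specifically, verifying condition (3) (no doubles) and condition (1) ($\spann w'' = W$) for the modified set, while keeping straight which of $v_i, v_i'$ is being inserted and confirming it genuinely lies outside the old spanning set. The ``no doubles'' check hinges on the fact that when $i \in [w^c]$ the element of $P_i$ we insert was already absent from $w$; the ``still spans'' check is exactly the irredundancy built into the definition of a minimal set (Definition \ref{minset}). Neither is deep, but the argument must be phrased carefully enough that the same wording transports unchanged to the inductive step, so I would isolate the swap as a small sub-lemma: \emph{if $P_a \subset \spann w$ with minimal set $\iota(P_a) \subset w$ and $b \in [\iota(P_a)]$, then there is $w^\star \in \maxv$ with $b \in [(w^\star)^c]$}, and then the whole of Lemma \ref{pthreeandpd} is a one-line consequence of that sub-lemma plus Lemma \ref{maxspancontainspi}.
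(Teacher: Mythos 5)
Your base-case argument is essentially the paper's argument, and it is correct once the notational confusion is sorted out: pick whichever of $v_i,v_i'$ fails to lie in $\spann w_{-j}$ (this is the correct criterion; ``whichever lies in $\minipi$'' is wrong since typically neither does), swap it in for $v_j$, and apply Lemma~\ref{maxspancontainspi}. The hypothesis $i\in[\redw]$ together with the fact that the swapped-in vector is not in $\spann w_{-j}$ ensures that $i\notin[w]$, which is exactly what makes the ``no doubles'' check for the new set $\m=w_{-j,+i'}$ go through.

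The inductive step, however, has a genuine gap. Your proposed sub-lemma --- \emph{if $P_a\subset\spann w$, $\iota(P_a)\subset w$ is minimal, and $b\in[\iota(P_a)]$, then there is $w^\star\in\maxv$ with $b\in[(w^\star)^c]$} --- is missing the hypothesis that $a\notin[w]$, and without it the swap construction breaks down. In the inductive step you take $a=l$ with $l\in[\minpj]\subset[w]$, so one of $v_l,v_l'$ already lies in $w$; the swap $w_{-m}\cup\{v_l \text{ or } v_l'\}$ then contains both $v_l$ and $v_l'$ and hence has a double, so it is not in $\maxv$ and Lemma~\ref{maxspancontainspi} does not apply. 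This is precisely the point where the paper does extra work: it does \emph{not} take the minimal set for $P_j$ inside the original $w$. Instead it re-bases to the spanning set $\m=w_{-j,+i'}$ produced in the base case, for which $j\notin[\m]$; it then takes the minimal set $B$ for $P_j$ relative to $\m$, and separately proves that $l\in[\minpj]$ if and only if $l\in[B]$ (for $l\neq j,i$). Only after this re-basing does the swap argument legitimately recurse. Without that re-basing step (or something playing the same role), ``run exactly the base-case argument with $(l,m)$'' is not a valid move, and that is the missing idea in your proposal.
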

\begin{proof} 

 By definition, if $v_j \in \minipi$, then $P_i \not \subset \spann \minipi_{-j}$ and so either $v_i$ or $v_i'$ is \emph{not} in $\spann \minipi_{-j}$. WLOG, let $v_i' \not \in \minipi_{-j}$ and note that this assumption implies that $v_i' \not \in \spann w_{-j}$. Since $v_i' \not \in \spann w_{-j}$, 
the set $\m$ is a spanning set and since $j \in [\m^c]$ we find that $P_j \subset \spann \m \subset W$ by Lemma \ref{maxspancontainspi}. For the same $w \in \maxv$ as above, let $\minpj \subset w$ be the minimal set for $P_j$ and let $l \in [\minpj]$.  
We can use the same proof to prove that $P_l \subset W$ if we can find a minimal set $B$ for $P_j$ such that $l \in [B]$ and $j \not \in [B]$.  Since $P_j \subset \spann w_{-j, +i'}$, there exists a unique $B \subset \m$ minimal for $P_j$.  For $l \neq j, i'$,  note that $l \in [\minpj]$ if and only if $l \in [B]$. since we already know that $P_j$ and $P_i$ are contained in $W$, we may assume that $l \neq j,i'$. Now since $j \not \in [B]$, we can use the same proof as above to conclude that $P_l \subset W$. 

\end{proof}

Fix $w \in \maxv$ and define
\[ [\Upsilon_n] = \{ \ j \ | \ v_j \in [\minipi], \ \text{such that} \ \minipi \subset w \ \text{and} \ i \in \Upsilon_{n-1} \} \cup \Upsilon_{n-1},  \] 
where 
\[ [\Upsilon_0] = \{ \ j \ | \ v_j \in [\minipi], \ \text{such that} \  \minipi \subset w  \ \text{and} \ i \in [\redw], \} \] is the ``initial condition." Note that we meant what we wrote when we defined $[\Upsilon_0]$, i.e., we do \emph{not} want $\Upsilon_0= [\redw]$.  Clearly, the sets $\Upsilon_1, \Upsilon_2, \dots$ will eventually converge to some subset $\Upsilon$ of $w$. That is, there exists some $n_0 >> 0$ such that $ \Upsilon_{n_0}= \Upsilon_{n_0 + i_0}, \forall i_0 \ge 0$ and so $\Upsilon=\Upsilon_{n_0}$. 
 
 \begin{lemma} \label{special} If $j \in [\Upsilon]$, then $P_j \subset \spann \Upsilon$. 

\end{lemma}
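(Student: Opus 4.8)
The plan is to prove Lemma~\ref{special} by induction on the stage $n$ at which an index enters $[\Upsilon]$, showing the stronger statement that if $j \in [\Upsilon_n]$ then $P_j \subset \spann \Upsilon_n$, and then pass to the limit using that the sequence stabilizes. Since each $\Upsilon_n \subset w$ and $w$ is a basis of $W$, the spans $\spann \Upsilon_n$ form an increasing chain of subspaces of $W$; once the index set stabilizes at $\Upsilon = \Upsilon_{n_0}$, we have $\spann \Upsilon_n = \spann \Upsilon$ for all $n \ge n_0$, so the limiting statement follows from the stagewise one.

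For the base case $n=0$: if $j \in [\Upsilon_0]$ then by definition $v_j \in [\minipi]$ for some minimal set $\minipi \subset w$ with $i \in [\redw]$. By Lemma~\ref{pthreeandpd}, $P_j \subset W$, but we need the sharper containment $P_j \subset \spann \Upsilon_0$. The key observation is that the minimal set $\minipi$ is entirely contained in $[\Upsilon_0]$ — indeed every element of every such $\minipi$ is by definition an element of $[\Upsilon_0]$ — and $P_i \subset \spann \minipi$. So for the vectors $v_j$ themselves with $v_j \in \minipi$, we only get $v_j \in \Upsilon_0$ trivially, and for $P_j = \spann\{v_j, v_j'\}$ we must show $v_j' \in \spann \Upsilon_0$ too. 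For this we re-run the argument of Lemma~\ref{pthreeandpd}: for each double-valued index $j$ occurring in some minimal set, the other generator $v_j'$ lies in a spanning set of the form $\m$ (after swapping), and the minimal set $B \subset \m$ for $P_j$ has all its indices in $[\Upsilon_1] \subseteq [\Upsilon]$ — this is exactly how $\Upsilon_1$ is built from $\Upsilon_0$. So the containment $P_j \subset \spann \Upsilon$ holds once we have run enough stages; this is why the lemma is stated for the stabilized $\Upsilon$ and not for $\Upsilon_0$.

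For the inductive step: suppose $j \in [\Upsilon_n] \setminus [\Upsilon_{n-1}]$. Then $v_j \in [\minipi]$ for some minimal $\minipi \subset w$ with $i \in \Upsilon_{n-1}$. By the inductive hypothesis $P_i \subset \spann \Upsilon_{n-1} \subset \spann \Upsilon$, and since the minimal set $\minipi$ for $P_i$ computed inside $w$ is the unique minimal one, $\minipi \subset \Upsilon_{n-1}$ (because any basis vector appearing in the expansion of $v_i$ or $v_i'$ relative to $w$ must appear in the expansion relative to the sub-basis $\Upsilon_{n-1}$, which spans a space containing $P_i$). Hence $v_j \in \Upsilon_{n-1} \subset \Upsilon$. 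The remaining point is again $v_j' \in \spann \Upsilon$: apply the swap-and-minimal-set argument from Lemma~\ref{pthreeandpd} to produce a spanning set $\m$ and a minimal $B \subset \m$ for $P_j$ with $j \notin [B]$; the elements of $B$ have indices that, tracing through the $\Upsilon$-recursion, lie in $[\Upsilon]$, and $P_j \subset \spann B \subset \spann \Upsilon$.

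The main obstacle I anticipate is the bookkeeping needed to justify that all indices appearing in the auxiliary minimal sets $B$ (the ones used to capture the ``second generator'' $v_j'$) genuinely belong to $[\Upsilon]$ and not to some larger set — in other words, that the recursion defining $[\Upsilon_n]$ is closed under exactly the operations produced by iterating the proof of Lemma~\ref{pthreeandpd}. Concretely one must check that when we swap $v_i \leftrightarrow v_i'$ to form $\m = w_{-j,+i'}$ and extract the minimal set $B \subset \m$ for $P_j$, the new basis vectors appearing in $B$ (those coming from $w_{-j}$ together with possibly $v_{i'}$) are accounted for: the ones from $w_{-j}$ that lie in the expansion of $v_j$ relative to $\m$ were already forced into $\minipi \subset \Upsilon_{n-1}$ or become descendants in the next stage, and $P_{i'} = P_i \subset \spann\Upsilon$ by hypothesis. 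Making this closure argument airtight — rather than merely plausible — is where the care goes; the rest is a routine induction plus the stabilization remark already recorded before the lemma.
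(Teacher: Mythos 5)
Your stated inductive claim is off by one stage, and this causes the argument to unravel. You propose to show ``if $j \in [\Upsilon_n]$ then $P_j \subset \spann \Upsilon_n$.'' But this is not what the recursion delivers. When $j \in [\Upsilon_n]$, the definition of $\Upsilon_{n+1}$ adds the indices of $\minpj$ to $\Upsilon_{n+1}$, not to $\Upsilon_n$; so the containment one actually obtains is $\minpj \subset \Upsilon_{n+1}$, hence $P_j \subset \spann \Upsilon_{n+1}$, one stage later. Indeed you notice this yourself in the middle of the base case (``this is why the lemma is stated for the stabilized $\Upsilon$ and not for $\Upsilon_0$''), which means the induction as set up never really gets off the ground, and the inductive step's appeal to ``$P_i \subset \spann\Upsilon_{n-1}$ by the inductive hypothesis'' is appealing to a statement that was never established (and is in general false at that stage).

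Beyond the index slip, the argument is overbuilt. You spend most of the proposal re-running the swap-and-minimal-set machinery of Lemma~\ref{pthreeandpd} (forming $\m = w_{-j,+i'}$, extracting a minimal set $B \subset \m$, tracking doubles), and then concede that ``making this closure argument airtight is where the care goes.'' None of that is needed. The proof is an immediate read-off from the recursion: if $j \in [\Upsilon]$ then $j \in [\Upsilon_n]$ for some $n$; by the recursive definition every index occurring in $\minpj$ is then placed into $[\Upsilon_{n+1}]$, i.e.\ $\minpj \subset \Upsilon_{n+1}$; and by the definition of a minimal set $P_j \subset \spann \minpj \subset \spann \Upsilon_{n+1} \subset \spann \Upsilon$. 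The role of Lemma~\ref{pthreeandpd} here is only to guarantee $P_j \subset W$, so that $\minpj \subset w$ exists --- it is not invoked as an engine for producing new spanning sets. So the gap is real: the inductive framing is incorrect as stated, and the heavy lifting you outline is both unnecessary and, by your own admission, not carried through.
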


\begin{proof}
  If $j \in [\Upsilon]$, then $j \in [\Upsilon_n]$ for some $n$. By construction,  $\minpj \subset \Upsilon_{n+1}$ and so $P_j \subset \spann \Upsilon_{n+1}$. In particular, $P_j \subset \spann \Upsilon$ since $\Upsilon_{n+1} \subset \Upsilon$.

\end{proof}

The proof of Theorem \ref{main} is now trivial: 

\begin{proof}[Proof of Theorem \ref{main}]  
Let
\[[\Upsilon] =\{j_1, \dots, j_h \}  \] 
and note that $\on{dim} \spann \Upsilon \le h$ and that $\spann \Upsilon$ contains at least $h$ planes, i.e, $P_{j_1}, \dots, P_{j_h} \subset \spann \Upsilon$ by Lemma \ref{special}. 
Recall that by construction $j_1, \dots, j_h \in [w]$, but $P_i \subset \spann \Upsilon$ for all $i \in [\redw]$ since we defined $\bigcup_{i \in \redw } \minipi= \Upsilon_0$. Since $\on{card} \redw \ge 1$, the set $\{P_{j_1}, \dots, P_{j_h} \} \cup \bigcup_{i \in \redw} P_i$ satisfies the conclusion of Theorem \ref{main}. 
\end{proof}


\section{Odd periods}

In \cite{WittenRamond}, the odd periods are defined to be restrictions of elements of $\Omega$ to the components of the Ramond divisor, see Section \ref{oddwitten}. Throughout the paper, we computed the odd periods  as values of elements in $H^0(C,L)$ along the points of $D$. In this section of the appendix we prove (Lemma \ref{sameperiod}) that these values agree with the values of the odd periods as defined by Witten . Throughout this section we consider only {\em split} super Riemann surfaces, e.g. ones defined over a point or more generally over a (bosonic) base.

\subsection{ $\Omega_1=H^0(C,L)$}  
We begin by recalling Witten’s argument that the odd elements of $\Omega$ can be identified with global sections of $L$. 
Let $\on{Ber}_X'$ denote the subsheaf of $\Ber(R)$ whose sections have a $\theta$-independent residue along $R$.  One can check that 
\begin{equation} \label{splits}   \Ber'= \Omega_C^1 \oplus \Pi L.  \end{equation} 
In contrast to an ordinary curve,  the dualizing sheaf $\Ber$ on the supercurve $X$ is not equal to the cotangent bundle $\Omega_X^1$: this is of course obvious given that $\Ber$ has rank $0|1$ while $\Omega_X^1$ has rank $1|1$. However, 
there is a canonical isomorphism $\alpha: H^0(X, \Ber') \cong \Omega$ which in local superconformal coordinates $(z, \theta)$ sends $s=(g(z) + \theta f(z) ) [dz | d \theta]$ to 
\begin{equation}  \label{standard} \alpha(s) = g(z) d \theta + (f(z)  + \theta g'(z)) dz. \end{equation} 
Composing $\alpha^{-1}$ with \eqref{splits} we find that $\Omega= H^0(C, \Omega_C^1) \oplus \Pi H^0(C,L)$ (we will write $=$ whenever an isomorphism is canonical), and so  $\on{dim} \Omega = \on{dim} H^0(X, \Ber')=g| h^0(L)$.


\subsection{Odd periods} \label{oddwitten} One can check that 
around each $R_i \in R$, there exist superconformal coordinates $(z, \theta)$ such that $R_i$ is defined by the equation $z=0$.  In these coordinates a general element of $\Omega$ is of the form $f(z, \theta) dz + g(z, \theta) d \theta$ and its restriction to $R_i$ (to compute: set $z=0$ and apply the condition $ds=0$)  can be shown to be equal to 
\[  g(0) \ d \theta, \]  
where the constant $g(0) \in \C$ is unique up to a superconformal change of coordinates $\theta \mapsto  - \theta$.  Let $\sigma= \{ \sigma_i \in \{ +, - \} \}$ denote a choice of $\theta$ or $-\theta$ for each $R_i \in R$. 

\begin{definition} For each $R_i \in R$, the \emph{odd period $\frak{o}_i^{\sigma}(s)$ of $s \in \Omega$ with respect $\sigma$ } is the constant $\sigma_i \cdot g(0)$.
\end{definition}

\begin{lemma}  \label{vanishing} The odd (resp. even) periods of even (resp. odd) elements in $\Omega$ are equal to zero.  
\end{lemma} 

\begin{proof}  If $s$ is an even element in $\Omega$, then its restriction to $R_i$ must also be even; however $g_0 d \theta$ is odd for all non-zero $g_0$. Thus $g_0=0$. 

Now let $s \in \Omega_1$ be odd and let $\{a_1, \dots, a_g, b_1, \dots, b_g \}$ be A- and B-cycles on $C$.  The even (A- and B-) periods of $s$ are the integrals 
\[  \int_{a_i} \tau^*(s), \ \ \int_{b_i} \tau^*(s) \] 
where $\tau: C \subset X$ is the canonical inclusion. In local superconformal coordinates $(z, \theta)$, the odd closed one form $s$ can be written as $s= f(z) \theta dz + g(z) d \theta$. The pullback $\tau^*(s)$ is equal to $0$ since $\tau^*(\theta)=0$ and thus both of the above integrals are equal to zero. 
\end{proof} 

We will now show that odd periods can be computed as residues along $D$. We need to explain what we mean by a residue of an element in $\Omega$. 
Using the identification $\Omega= H^0(X, \Ber')$ and the splitting \eqref{splits} we can write every element $s$ in $\Omega$ as a sum $s_0+ s_1$ of a holomorphic differential $s_0$ on $C$ and a global section $s_1$ of the twisted spin structure $L$.  For each $p_i \in D$, we define \[ \on{res}_i(s)= \on{res}_{p_i}(s_0) + \on{res}_{p_i}(s_1)=\on{res}_{p_i}(s_1) \] 
where $\on{res}_{p_i}(s_0)=0$ because $s_0$ is holomorphic near $p_i$. Since $s_1$ is not a differential, we need to explain what we mean by $\on{res}_{p_i}(s_1)$: Locally near each $p_i \in D$, there exists coordinates $U(z)$ such that $\Omega_U^1 \cong \mc{O}_U$ and such that $p_i$ is defined by $z=0$. Let $dz$ be a generator for $\Omega_U^1$.  We then have two possible generators for $L \vert_U$, namely $\theta:=\sqrt{dz/z}$ and $-\theta:= - \sqrt{dz/z}$. For a fixed generator $\theta$, there is a canonical isomorphism $L \vert_U = \Omega_U^1(D)=\Omega_U^1(p_i)$ sending $\theta$ to $dz/z$.  
Let $\sigma = \{ \sigma_i \in \{ +, - \} \}_{i=1}^{2r} $ denote the sign of the generator for  each $L \vert_{U_i}$. Then each $s \in \Omega_1$ is locally of the the form $\sigma_i \cdot g(z) \theta$ and is sent to $\sigma_i \cdot \frac{g(z)}{z} dz$ in $\Omega_U^1(p_i)$: And, 
\begin{equation} \label{omegaind} \on{res}_{p_i}(s_1) =\sigma_i \cdot  \on{res}_{p_i}
\left (\frac{g(z)}{z} dz \right).  \end{equation}



\begin{lemma} \label{sameperiod} Let $\sigma$ be fixed and let $s \in \Omega$. Then for each point $p_i \in D$, we have \[ (1/ 2 \pi i) \frak{o}_{p_i}(s)= \on{res}_{p_i}(s). \]  
\end{lemma}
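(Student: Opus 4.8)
The plan is to reduce everything to a purely local computation in superconformal coordinates near a single puncture $p_i \in D$, and to track how the two definitions — Witten's odd period $\frak{o}_{p_i}(s)$ and the residue $\on{res}_{p_i}(s)$ — are each expressed in those coordinates. By Lemma \ref{vanishing}, the odd periods of an even element of $\Omega$ vanish, and directly from the definition of $\on{res}_{p_i}$ the residue of the even part $s_0$ (a holomorphic differential on $C$) vanishes as well; so without loss of generality we may assume $s \in \Omega_1$ is odd, and under the identification $\Omega_1 = H^0(C,L)$ it corresponds to a section $s_1$. Everything is then a statement about this section near $p_i$.

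First I would fix, near $p_i$, a superconformal coordinate $(z,\theta)$ with $R_i$ (equivalently $p_i$) cut out by $z=0$ and with $\theta = \sqrt{dz/z}$ a chosen local generator of $L$, exactly as in Section \ref{oddwitten}; this choice is what the sign data $\sigma_i$ records. In these coordinates $s_1$ has the local form $\sigma_i\, g(z)\,\theta$ for a holomorphic function $g$. On one side, the canonical isomorphism $L|_U \cong \Omega^1_U(p_i)$ sending $\theta \mapsto dz/z$ turns $s_1$ into $\sigma_i\,\tfrac{g(z)}{z}\,dz$, whose classical residue is $\sigma_i\, g(0)$; this is exactly formula \eqref{omegaind}, so $\on{res}_{p_i}(s) = \sigma_i\, g(0)$. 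On the other side, I would pass through the isomorphism $\alpha: H^0(X,\Ber') \cong \Omega$ of \eqref{standard} to see what $s_1$ looks like as an actual closed $1$-form on $X$: the section of $\Ber'$ corresponding (under the splitting \eqref{splits}) to the $\Pi L$ summand $\sigma_i\, g(z)\,\theta = \sigma_i\, g(z)\,[dz|d\theta]^{1/2}$-type data maps under $\alpha$ to a form whose restriction to $R_i$ (set $z=0$, impose $ds=0$) is $\sigma_i\, g(0)\, d\theta$ — i.e. the constant appearing in Witten's definition is $\sigma_i\, g(0)$, so $\frak{o}_{p_i}(s) = \sigma_i\, g(0)$ as well. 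Matching the two gives $\frak{o}_{p_i}(s) = \on{res}_{p_i}(s)$; the factor $1/2\pi i$ is simply the normalization in the classical residue theorem ($\oint dz/z = 2\pi i$), so with the convention that $\on{res}$ denotes $\tfrac{1}{2\pi i}\oint$ the stated identity $(1/2\pi i)\frak{o}_{p_i}(s) = \on{res}_{p_i}(s)$ holds on the nose.

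The main obstacle, and the step that needs the most care, is bookkeeping the chain of canonical identifications consistently: one must make sure that the generator $\theta = \sqrt{dz/z}$ used to define the residue in \eqref{omegaind} is the \emph{same} trivialization of $L$ (up to the recorded sign $\sigma_i$) that is implicit when $\alpha$ and the splitting \eqref{splits} are used to realize $s_1$ as a closed $1$-form and then restricted to $R_i$ in Witten's definition. In particular, the superconformal coordinate $z$ and the half-density $\theta$ must be coordinated so that ``$dz/z$'' means the same thing in both computations; once that compatibility is pinned down, the equality of the two constants $g(0)$ is immediate from \eqref{standard} and a one-line residue computation. A secondary point to check is that the restriction-to-$R_i$ operation in Witten's definition (setting $z=0$ after imposing $ds=0$) indeed extracts precisely the coefficient $g(0)$ of $d\theta$ and not some shifted quantity involving $g'$; the extra $\theta g'(z)$ term in \eqref{standard} contributes nothing at $z=0$ to the $d\theta$-coefficient, so this is harmless, but it should be remarked on explicitly.
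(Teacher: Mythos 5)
Your proposal follows essentially the same route as the paper's proof: reduce to odd $s$ via Lemma \ref{vanishing}, fix superconformal coordinates $(z,\theta)$ near $p_i$ with $\theta = \sqrt{dz/z}$, and use $\alpha$ together with the splitting \eqref{splits} to compute both $\frak{o}_{p_i}(s)$ and $\on{res}_{p_i}(s)$ locally as $\sigma_i\,g(0)$ (modulo the $2\pi i$ normalization). The only difference is ordering — you compute the residue side first, the paper computes the odd period side first — which is cosmetic.
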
 

\begin{proof} The lemma is obvious for even elements $\Omega$ so let $s$ be an odd element and let $U(z, \sigma \cdot \theta)$ be superconformal coordinates near $R_i$ so that $R_i$ is defined by $z=0$. Then there exist unique $g$ such that 
\[ s \vert_U = \sigma_i \cdot ( g(z) d \theta + g'(z) \theta dz) \] 
and $\alpha^{-1}(s)= ( \sigma_i \cdot g(z) ) [dz | d\theta] $.  The odd periods are now easy to read off: namely, $\frak{o}_i^{\sigma}(s)=\sigma_i \cdot g(0)$. 

The section
$\alpha^{-1}(s)$ is identified with the section $\sigma_i \cdot g(z) \theta$ of $L \vert_U$ by \eqref{splits} where $\theta= \sqrt{\frac{dz}{z}}$. The section  $\sigma_i \cdot g(z) \theta$  is in turn identified with the section $\sigma_i \cdot \frac{g(z)}{z}  dz$ of $\Omega_U^1(p_i)$ under the identification $L \vert_U = \Omega_U^1(p_i)$. It is now easy to see that \[ 2 \pi i \on{res}_{z=0} \left (\frac{g(z)}{z} dz \right)= \sigma_i \cdot g(0)=\frak{o}_i^{\sigma}(s) \] .

\end{proof}

\end{appendices}


\begin{thebibliography}{ABCD}

\bibitem[W13]{WittenLowGenus}  Witten, E., ``Notes On Holomorphic String And Superstring
Theory Measures Of Low Genus," arXiv:1306.3621.

\bibitem[W15]{WittenRamond}  Witten, E., ``The Super Period Matrix With Ramond Punctures," arXiv:1501.02499.

\bibitem[MZ19]{Moose} Moosavian, Seyed Faroogh, and Yehao Zhou. "On the existence of heterotic-string and type-II-superstring field theory vertices." arXiv:1911.04343 (2019).

\bibitem[BR21]{BruzzoHernRamond} Bruzzo, Ugo and Ruip\'erez, Daniel Hern\'andez, "The supermoduli of SUSY curves with Ramond punctures," Revista de la Real Academia de Ciencias Exactas, Físicas y Naturales. Serie A. Matemáticas. 115.3 (2021): 1-33.

\bibitem[CV19]{CodogniModuli} Codogni, Giulio, and Filippo Viviani. "Moduli and periods of supersymmetric curves," arXiv:1706.04910. 

\bibitem[OV22]{OVRamond} Ott, Nadia, and Alexander A. Voronov. "The supermoduli space of genus zero super Riemann surfaces with Ramond punctures," Journal of Geometry and Physics (2022). 


\bibitem[D22]{Nathan} Donagi, N., ``Valid Widgets Contain Legal Subwidgets," arXiv:2208.03866

\bibitem[A71]{Atiyah} Atiyah, Michael Francis (1971),``Riemann surfaces and spin structures", Annales Scientifiques de l'École Normale Supérieure, Série 4, 4: 47–62, ISSN 0012-9593, MR 0286136.

\bibitem[M71]{Mumford} Mumford, David (1971), ``Theta characteristics of an algebraic curve", Annales Scientifiques de l'École Normale Supérieure, Série 4, 4 (2): 181–192, MR 0292836.

\bibitem[ACGH]{ACGH} Arbarello, Enrico; Cornalba, Maurizio; Griffiths, Philip A.; Harris, Joe (1985), ``The Basic Results of the Brill–Noether Theory". Geometry of Algebraic Curves. Grundlehren der Mathematischen Wissenschaften 267. Vol. I. pp. 203–224. doi:10.1007/978-1-4757-5323-3-5. ISBN 0-387-90997-4.
\bibitem[BN]{BN} von Brill, Alexander; Noether, Max (1874). "Ueber die algebraischen Functionen und ihre Anwendung in der Geometrie". Mathematische Annalen. 7 (2): 269–316. doi:10.1007/BF02104804. JFM 06.0251.01.


\end{thebibliography}
\end{document}